\newtheorem{theorem}{Theorem}
\newtheorem{lemma}[theorem]{Lemma}
\newtheorem{corollary}[theorem]{Corollary}
\def \sset {{\Scal}}
\newenvironment{proof}
{\textit{\bf Proof:} }
{$\square$\\ \\}
\newcommand{\given}{{\;|\;}}
\newcommand{\probb}[1]{{\mathbb{P}}\bigl(\,{#1}\,\bigr)}
\newcommand{\prob}{\mathbb{P}}
\newcommand{\E}{\mathbb{E}}
\def \L {{\mathcal{L}}}
\def \M {{{M}}}
\def\sfrac#1#2{{\textstyle{#1 \over #2}}}
\def \half {{{ \sfrac{1}{2}}}}
\def \Half {{{ \frac{1}{2}}}}
\def \Quarter {{{ \frac{1}{4}}}}
\def \C {{{C}}}
\def \th {{{\rm Th}}}
\def \calls {{\mathcal{C}}}
\def \length {{\rm length}}
\def \coinflip {{{\rm Bernoulli}(\half)}}
\def \advantage {\mathbf{MaxAdv}_{\mathbf{\numoraclecalls,\numqueries}}}
\def \e {{{\mathbb E }}}
\def \el {{{\mathbb E_\leakageval }}}
\def \bias {{\mathcal{B}}}
\def \leak {{\Psi}}
\def \leakagevals {{\mathcal{L}}}
\def \ent {{\mathbf{H}}}
\def \P {{\bf P}}
\def \probl {{\prob_\leakageval}}
\def \g {{g}}
\def \lam {{\lambda}}
\def \mur {{\mu_\leakageval^r}}
\def \reals {{\bf R}}
\def \T {{\mathcal T}}
\newcommand{\probp}[1]{{\mathbb{P}}\left({#1}\right)}
\newcommand{\Scal}{\mathcal{S}}
\newcommand{\Ecal}{\mathcal{E}}
\def \messagespace {\mathcal{M}} 
\def \messagelength {\mathcal{m}} 
\def \keyspace {\mathcal{K}} 
\def \keyvariable {K} 
\def \keyval {k} 
\def \keylength {\mathcal{k}} 
\def \numqueries {q} 
\def \leakageval {l} 
\def \leakagelength {\mathcal{l}} 
\def \leakagevariable {L} 
\def \numoraclecalls {\mathcal{r}} 
\def \totaltime {T} 
\def \numprobes {n} 
\def \s {\mathcal{s}} 
\def \leakagefunc {\Phi} 
\title{Format Preserving Encryption in the Bounded Retrieval Model}
\author[1]{Ben~Morris}
\author[1]{Hans~Oberschelp}
\author[1]{Hamilton Samraj~Santhakumar}
\affil[1]{Department of Mathematics, University of California, Davis}
\date{July 16, 2023}
\begin{document}

\maketitle

\begin{abstract}
  In the bounded retrieval model, the adversary can leak a certain amount of information from the message sender's computer (e.g., $10$ percent of the hard drive). Bellare, Kane and Rogaway give an efficient symmetric encryption scheme in the bounded retrieval model. Their scheme uses a giant key (a key so large only a fraction of it can be leaked.) One property of their scheme is that the encrypted message is larger than the original message. Rogaway asked if an efficient scheme exists that does not increase the size of the message. In this paper we present such a scheme. 
\end{abstract}


\section{Introduction}
The present paper attempts to solve the problem of format preserving encryption in the bounded retrieval model, by constructing a pseudorandom permutation and providing concrete security bounds in the random oracle model. The bounded retrieval model was introduced to study cryptographic protocols that remain secure in the presence of an adversary that can transmit or leak private information from the host's computer to a remote home base. One example of such an adversary is an APT (Advanced Persistent Threat), which is a malware that stays undetected in the host's network and tries to ex-filtrate the secret keys used by the host. The premise of the bounded retrieval model is that such an adversary cannot move a large amount of data to a remote base without being detected or that it can only communicate with the remote base through a very narrow channel. That is, the model assumes an upper bound on the amount of data that an adversary can leak. In \cite{big-key} Bellare, Kane and Rogaway introduce an efficient symmetric encryption scheme in this model and give concrete security bounds for it. They assume that the secret key is very large and model the leaked data as a function that takes the secret key as the input and outputs a smaller string. The length of this string is a parameter on which the security bounds depend. Their algorithm uses a random seed $R$ along with the big key to generate a key of conventional length that is indistinguishable from a random string of the same length even when the function used to model the leaked data depends on calls to the random oracle that the algorithm uses. It then uses this newly generated key and any of the conventionally available symmetric encryption schemes, say an AES mode of operation, to create a ciphertext $C$. Finally it outputs $(R,C)$.

The above scheme is not format preserving since the final ciphertext $(R,C)$ is longer than the original message $M$. A question posed by Phillip Rogaway (personal communication) is whether a secure format preserving encryption scheme exists in the bounded retrieval model. Another way to pose this question is as follows : If the adversary is allowed to leak data, is it possible to construct a pseudorandom permutation that is secure under some notion of security, say the CCA notion of security? The aim of this paper is to answer this question. Unfortunately it is not possible to come up with a pseudorandom permutation that is secure under the strong notion of CCA security. This is because in the CCA model, before trying to distinguish between a random permutation and the pseudorandom permutation, the adversary can choose to look at a sequence of plaintext-ciphertext pairs that he chooses. If a leakage of data is allowed, the adversary can simply leak a single plaintext-ciphertext pair and use it to gain a very high CCA advantage. Hence we weaken the notion of security by requiring that the adversary can only look at a sequence of plaintext-ciphertext pairs where the plaintexts are uniformly random and distinct. We then ask her to distinguish between a truly random permutation and the pseudorandom permutation.
(See Section \ref{security} for a precise definition of the security in our setup.)\\
In the present paper, we operate in the setting of the random oracle model (see \cite{randomoracle}). Our contribution is to give a pseudorandom permutation in the bounded retrieval model and prove that it is secure in the weak sense that is discussed above.

\indent Just as in \cite{big-key} we use a big key. We now give a brief sketch of our approach. Note that if one fixes the string of leaked bits, the key is a uniform sample from the preimage of the leaked string. If the length of the leaked string is small, then on average the preimage is very large. What this means is that even when the leakage is known, with a high probability the total entropy of the key is high. This implies that the sum of entropies of each bit in our key is large. This means that many of the bits in the key are ``unpredictable'' in the sense that the probabilty of $1$ is not close to $0$ or $1$. So, if one uses a random oracle to look at various positions of the key and take an XOR, it is likely that the resulting bit is close to an unbiased random bit. This idea of probing the key is similar to the one used in \cite{big-key}. The content of Sections \ref{technical_section} and \ref{Randomnessofbitsection}, which form the heart of this paper, is to show that bits generated by probing the key are close to i.i.d. unbiased random bits. To construct a pseudorandom permutation using these bits, we use a particular card shuffling scheme called the Thorp shuffle, just as in $\cite{proceedings}$. This construction is given in the next section.

\section{Thorp shuffle/maximally unbalanced Feistel network }\label{definition}

One method of turning a pseudorandom function into a pseudorandom permutation is to use a Feistel network (see \cite{lr}). The {\it maximally unbalanced} Feistel network is also known as the Thorp shuffle. Round $r$ of this shuffle can be described as follows. Suppose that the current binary string (i.e., the value of the message after the first $r-1$ rounds of encryption) is $LR$, where $\length(L) = 1$ and $\length(R) = \messagelength-1$. Then round
$r$ transforms the string to $RL'$, where
\[
L' = L \oplus F_\keyval(R, r),
\]
and $F_\keyval$ is a pseudorandom function. Let $X_t(m)$ denote the result of $t$ Thorp shuffles on message $m$. The novel idea in the present paper is to use a pseudorandom function based on a big key $\keyval$.\\
~\\
{\bf The Big Key Pseudorandom Function:} 
Let $\keylength$ be the length of the big key $\keyval$. To compute $F_k(R, r)$, apply the random oracle to $(R, r)$  to obtain $\big(P, \sset \big)$, where $P = (P_1,\ldots, P_\numprobes)$  is $\numprobes$ samples with replacement from $\{1,\ldots, \keylength\}$ and $\Scal$ is a uniform random subset of $\{1,\ldots,n\}$ that is independent of $P$. By analogy with \cite{big-key}, we define the random subkey by $\keyval[P] := (\keyval[P_1], \dots, \keyval[P_\numprobes])$. Finally, define
\[
F_k(R, r) = \oplus_{i \in \sset} \keyval[P_i] \;.
\]
That is, $F_k(R, r)$ is the XOR of a randomly chosen subsequence of the subkey. For a given key $k$ we define our cipher to be $X_\totaltime(\cdot)$ for some fixed positive integer $\totaltime$.\\
\\
{\bf Remark:} We conjecture that it would also work (i.e., we would get a suitable pseudorandom function)
if we took the XOR of the entire subkey;
the current definition is used because it makes the proof simpler. \\
~\\

\section{Security of the Cipher}\label{security}
In this section we introduce a notion of security for pseudorandom permutations under the assumption that there is a leakage of data.
Let $\keyspace = \{0,1\}^\keylength$ denote the set of keys and let $\messagespace = \{0,1\}^\messagelength$ denote the set of messages. We assume that the adversary can leak $\leakagelength$ bits of data and just as in \cite{big-key}, use a function $\leakagefunc : \keyspace\rightarrow\{0,1\}^\leakagelength$ to model this. Henceforth we will refer to this function as the \textit{leakage function}. The adversary has the power to choose this function and this function can depend on calls to the random oracle. For a key $\keyvariable$, we will use $\leakagevariable =\leakagefunc(\keyvariable)$ to denote the output one gets by applying the leakage function to it. We will call this the \textit{leakage}. We allow the adversary to make $\numoraclecalls$ random oracle calls and decide on a leakage function $\leakagefunc$. After the adversary has chosen a leakage function, consider the following two worlds.\\
\textbf{World 1}: In this world, we first choose distinct uniformly random messages $M_1,\ldots, M_\numqueries \in \messagespace$. Then, for a uniformly random key $\keyvariable\in\keyspace$, we set $C_i = X_\totaltime(M_i)$ where $X_t$ is the Thorp shuffle based cipher we defined in Section $\ref{definition}$ and $\totaltime$ is some fixed positive integer. We give the adversary access to the leakage $\leakagevariable$, the input-output pairs $(M_1,C_1), \ldots, (M_\numqueries, C_\numqueries)$ and the random oracle calls that were used by the algorithm
to compute the $X_\totaltime(M_i)'s$.\\
\textbf{World 0}: In this world, again we choose distinct uniformly random messages $M_1,\ldots, M_\numqueries \in \messagespace$. We once again choose a random key $\keyvariable\in \keyspace$ and compute $\leakagevariable$ and all the random oracle calls necessary to evaluate the $X_\totaltime(M_i)'s$, just like world 1. However, instead of setting $C_i's$ to be the outputs of the cipher, we choose a uniformly random
permutation $\pi:\messagespace\to\messagespace$ and set $C_i = \pi(M_i)$. Just as in world 1, the adversary is provided access to the input-output pairs for the $\numqueries$ messages, the leakage $\leakagevariable$ and the random oracle calls.\\
\indent We now place him in these two worlds one at a time without telling him which world he is in. In each of these cases we ask the adversary to guess which world he is in. Let $\mathcal{A}(0)$ and $\mathcal{A}(1)$ denote the answers he gives in world 0 and world 1 respectively. Then, we define the advantage of an adversary as 
\begin{equation}
	\textbf{Adv}(\mathcal{A}) = \prob_1\big(\mathcal{A}(1)=1 \big) - \prob_0\big(\mathcal{A}(0)=1 \big),
\end{equation} 
where $\prob_i$ is the probability measure in world $i$. Define the maximum advantage
\begin{equation}
	\mathbf{MaxAdv}_{\mathbf{\numoraclecalls,\numqueries}} = \max_{\mathcal{A}}\Big( \textbf{Adv}(\mathcal{A})\Big),
\end{equation}
where the maximum is taken over all adversaries satisfying the above mentioned conditions. Note that in the above setup if we allow
the messages to be chosen by the adversary instead of being random, we get the notion of security of a block cipher against chosen plain text attack (CPA) under leakage. Security against CPA is weaker than security against CCA (chosen ciphertext attack). Unfortunately, if a leakage is allowed, it is not possible to design a cipher that is secure in the CPA framework. This is because of the adversary who does the
following: Let $\numqueries=1$. Assume that the message length $\messagelength$ is less than $\leakagelength$. For each key $\keyval$, the adversary includes the ciphertext $X_\totaltime(M_1)$ into the leakage, for a fixed message $M_1$. Then, the adversary answers as follows. If $C_1 = X_\totaltime(M_1)$ then the adversary guesses that he is in world 1. Else, the guess is world 0. In this case, $\prob_1 (\mathcal{A}(1)=1)=1$ and $\prob_0(\mathcal{A}(0)=1)=1/2^\messagelength$. Hence this adversary has a very high advantage. By instead providing the adversary with uniform random plaintext-ciphertext pairs, we get the notion of secuirty against a known plain text attack (KPA) under leakage. The main result of this paper is the following bound on the maximum advantage of such an adversary.
\begin{theorem}    \label{maintheorem}
	The adversary's advantage satisfies
	\[
	\advantage \leq \frac{\numqueries}{\s+1}\bigg(\frac{4\messagelength\numqueries}{2^\messagelength} \bigg) ^\s
	+
	{\numqueries\totaltime \over 2}
	\Bigl[ h^{-1} \Bigl(1 - {\alpha + \numprobes \over \keylength} \Bigr) \Bigr]^{\numprobes/2} + \frac{ \numqueries \numoraclecalls }{2^{\messagelength-1}} + \frac{\numqueries\totaltime}{2^\messagelength} , \,
	\]
	where  $\numoraclecalls$ is the number of random oracle calls, $\alpha = \leakagelength + \messagelength(\numqueries + 1) + \totaltime$, $\s$ is an integer satisfying the equation $\totaltime = \s(\messagelength - 1)$ and $h^{-1}$ is the inverse of the function $h$ restricted to $[1/2, 1]$, where $h$ is defined by $h(p) = p \log_2 \frac{1}{p} + (1-p)\log_2 \frac{1}{p-1}$.
\end{theorem}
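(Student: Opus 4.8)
The plan is to run a two-step hybrid argument that first replaces the big-key pseudorandom function $F_\keyval$ by a genuinely random round function and then invokes the mixing of the Thorp shuffle. Introduce an intermediate world in which each distinct random-oracle input $(R,r)$ yields an independent unbiased bit in place of $\oplus_{i\in\sset}\keyval[P_i]$; in this world the cipher is exactly a Thorp shuffle driven by a uniformly random round function. Writing $\textbf{Adv}(\mathcal{A})$ as the difference of the world-$1$ and world-$0$ acceptance probabilities and inserting this intermediate world, the triangle inequality splits the advantage into (a) the gap between the real cipher and the ideal-round-function shuffle, and (b) the gap between the ideal-round-function shuffle and a uniformly random permutation.

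For part (b) I would invoke the known mixing/security bound for the Thorp shuffle (the analysis used in \cite{proceedings}): for $\numqueries$ queries on a domain of size $2^\messagelength$ with $\totaltime=\s(\messagelength-1)$ rounds, a random-function Thorp shuffle is within $\frac{\numqueries}{\s+1}\big(\tfrac{4\messagelength\numqueries}{2^\messagelength}\big)^{\s}$ of a uniform permutation. Because our messages are uniform and distinct (the KPA setting), the hypotheses of that bound are met, and this yields the first term and handles the purely combinatorial part of the argument.

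The heart of the proof is part (a), which is precisely the content promised in Sections \ref{technical_section} and \ref{Randomnessofbitsection}: bits obtained by probing the big key are close to i.i.d. fair bits even given the leakage. I would condition on the leakage $\leakagevariable=\leakagefunc(\keyvariable)$ and on everything the adversary sees, and then run an entropy budget in the spirit of \cite{big-key}. The point is that conditioning reduces the Shannon entropy of $\keyvariable$ by at most the number of revealed bits: $\leakagelength$ from the leakage, about $\messagelength(\numqueries+1)$ from the ciphertexts and the trajectory under analysis, and $\totaltime$ from the round-function outputs along that trajectory, for a total of $\alpha$. After also exposing the $\numprobes$ probe positions, the residual entropy is at least $\keylength-\alpha-\numprobes$, so by subadditivity the average per-coordinate entropy is at least $1-(\alpha+\numprobes)/\keylength$, and a randomly probed coordinate inherits this on average. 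Since $h$ is the binary entropy and $h^{-1}$ its inverse on $[1/2,1]$, a coordinate of entropy $y$ takes its more likely value with probability at most $h^{-1}(y)$; feeding this into an $L^2$ (Cauchy--Schwarz) estimate on the XOR over the random subkey $\sset$ drives the bias of one round-function output down to order $[h^{-1}(1-(\alpha+\numprobes)/\keylength)]^{\numprobes/2}$. A union bound over the $\numqueries\totaltime$ round-function evaluations, each contributing half its bias to the total-variation distance, then produces the second term.

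Finally, replacing $F_\keyval$ by fresh independent bits is only legitimate when the relevant inputs $(R,r)$ are fresh, which fails on two bad events whose probabilities give the remaining terms. An adversary oracle call may coincide with one the algorithm makes; since $R$ ranges over $2^{\messagelength-1}$ values, bounding this over the $\numoraclecalls$ adversary calls and the $\numqueries$ trajectories contributes $\numqueries\numoraclecalls/2^{\messagelength-1}$. And two of the $\numqueries$ trajectories may hit a common $(R,r)$ during the $\totaltime$ rounds, which correlates the supposedly-independent bits and contributes $\numqueries\totaltime/2^{\messagelength}$. Summing (a), (b) and the two collision corrections gives the stated inequality. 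I expect the main obstacle to be part (a): converting the ``high average entropy'' statement into a bias bound on $F_\keyval$ that is uniform over the adversary's adaptive, oracle-dependent choice of leakage function, and pinning down the conditioning (the exact composition of $\alpha$) so the entropy is neither overcounted nor leaked through correlations between the probe positions and the revealed data.
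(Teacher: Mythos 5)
Your overall architecture is the same as the paper's: an intermediate world with an idealized (random-round-function) Thorp shuffle, the Morris--Rogaway--Stegers bound from \cite{proceedings} for the first term, an entropy-plus-$L^2$ analysis of the probed bits for the second term, and a union bound over the adversary's oracle calls for the $\numqueries\numoraclecalls/2^{\messagelength-1}$ term. The paper implements the middle step as a double hybrid (over queries, then over the $\totaltime$ rounds of each query), which is exactly your ``union bound over the $\numqueries\totaltime$ round-function evaluations.''

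There is, however, a genuine gap in your entropy budget, and it shows up in your misattribution of the last term. You assert that after conditioning on the leakage and the transcript ``the residual entropy is at least $\keylength-\alpha-\numprobes$.'' That is true only in expectation: for a particular value $\leakageval$ of the leakage-plus-transcript, the set $S_\leakageval=\leak^{-1}(\leakageval)$ can be far smaller than $2^{\keylength-\alpha}$, and then no bias bound of the stated form holds. The paper handles this with the second part of Lemma \ref{divide}, a Markov-type tail bound giving
\[
\prob\Bigl(\ent(S_{\leakagevariable}) < \keylength-\alpha\Bigr)\leq 2^{-\messagelength},
\]
and this failure probability, paid once per round-function evaluation in the hybrid, is precisely where the $\numqueries\totaltime/2^{\messagelength}$ term comes from. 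Your proposal instead derives that term from collisions of trajectories on a common $(R,r)$; but in the hybrid such already-determined bits are identical in the two neighboring worlds and contribute nothing, so that correction is spurious while the one you actually need is absent. Separately, the step you flag as ``the main obstacle''---passing from an average per-coordinate entropy of $1-(\alpha+\numprobes)/\keylength$ to the product bound $[h^{-1}(1-(\alpha+\numprobes)/\keylength)]^{\numprobes}$ on the collision probability of the probed substring---is indeed where the real work lies: the probes are correlated coordinates of the same key, so you cannot simply exponentiate a per-coordinate bound. The paper's Lemma \ref{mainlemma} does this by induction on the probes using the entropy chain rule, concavity of $h^{-1}$ via Jensen, and the Harris--FKG inequality to decouple the factor $\sum_x\mur(x)^2$ from the $h^{-1}$ factor; some argument of this kind is needed to make your sketch into a proof.
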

Lets try to make sense of this bound. The first two terms have exponents that we can control by choosing the parameters of the cipher. Specifically, we can, with modest assumptions on the number of queries and amount of leakage, make the first term as small as desired by running the cipher for $\totaltime = \mathcal{O}(
\log(\numqueries))$ rounds and make the second term equally small by sampling $\numprobes = \mathcal{O}(\log(\numqueries))$ probes in each round. \\
\indent To make sense of the last two terms, lets consider an adversary which we will call the \textit{naive adversary}. The naive adversary chooses a set $\messagespace'$ of $\lfloor \frac{\leakagelength}{\messagelength} \rfloor$ messages and uses their $\leakagelength$ bits of leakage to leak the ciphertext of each message in $\messagespace'$. Next, when placed in either world 0 or world 1, the naive adversary checks if any of the q random messages provided is from the collection $\messagespace'$. The naive adversary answers "world 1" if the corresponding ciphertext matches the laked ciphertext. Otherwise, they answer "world 0". If none of the $q$ provided messages are from $\messagespace'$, then the naive adversary answers based on the flip of an independent fair coin. Let $\mathbf{Adv_{naive}}$ denote the advantage of the naive adversary. Then,
\[
	\mathbf{Adv_{naive}} = \prob(M_i \in \messagespace' \text{ for some } 1 \leq i \leq \numqueries)(1-2^{-\messagelength}). \,
	\]
 Recall that the distinct messages $M_1,\ldots,M_q$ are sampled uniformly, and that $|\messagespace'| = \lfloor \frac{\leakagelength}{\messagelength} \rfloor$. Let $X = |\{M_1,\ldots,M_q\} \cap \messagespace'|$. Then $X$ is a hypergeometric random variable and
 \[
	\prob(M_i \in \messagespace' \text{ for some } 1 \leq i \leq \numqueries) = \prob(X > 0). \,
	\]
 Using the bound
  \[
	\prob(X > 0) \geq \frac{\e X}{1 + \e X} \,
	\]
for hypergeometric random variables, we get
\[
	\mathbf{Adv_{naive}} \geq \frac{\numqueries \lfloor \frac{\leakagelength}{\messagelength} \rfloor 2^{-\messagelength}}{1 + \numqueries \lfloor \frac{\leakagelength}{\messagelength} \rfloor 2^{-\messagelength}} \cdot (1-2^{-\messagelength}). \,
	\]
 With the modest assumption that $\numqueries \lfloor \frac{\leakagelength}{\messagelength} \rfloor \leq 2^\messagelength$, and the fact that $\messagelength \geq 1$, we can simplify this bound to
 \[
	\mathbf{Adv_{naive}} \geq \frac{\numqueries \lfloor \frac{\leakagelength}{\messagelength} \rfloor}{4 \cdot 2^m}. \,
	\]
 Returning to the bound of the advantage of the optimal adversary, if we assume that $\numoraclecalls,\totaltime \leq \lfloor \frac{\leakagelength}{\messagelength} \rfloor$ and $\numqueries \lfloor \frac{\leakagelength}{\messagelength} \rfloor \leq 2^m$, then
\[
	\advantage \leq \frac{\numqueries}{\s+1}\bigg(\frac{4\messagelength\numqueries}{2^\messagelength} \bigg) ^\s
	+
	{\numqueries\totaltime \over 2}
	\Bigl[ h^{-1} \Bigl(1 - {\alpha + \numprobes \over \keylength} \Bigr) \Bigr]^{\numprobes/2} + 12 \cdot \mathbf{Adv_{naive}} . \,
	\]
 Thus, with realistic assumptions, no adversary can do much better than the naive adversary. To make this precise consider the following example. Let $\keylength = 2^{43}$ bits and $\leakagelength = \keylength/8 = 2^{40}$ bits, i.e. the key has a size of 1 terabyte out of which, $12.5\%$ or about 125 gigabytes can be leaked. Assume that the message length is $\messagelength = 128$ bits. Fix $\numprobes = 500$, $\s=2$ and $\totaltime = \s (2\messagelength - 1) = 510$. Let $\Gamma(\numqueries)$ denote the two leading terms on the RHS of the above inequality, i.e.
 \[
 \Gamma(\numqueries) = 
 \frac{\numqueries}{\s+1}\bigg(\frac{4\messagelength\numqueries}{2^\messagelength} \bigg) ^\s
 +
 {\numqueries\totaltime \over 2}
 \Bigl[ h^{-1} \Bigl(1 - {\alpha + \numprobes \over \keylength} \Bigr) \Bigr]^{\numprobes/2},
 \]
 with values of $\keylength, \leakagelength, \messagelength, \numprobes, \s$ and $\totaltime$ fixed as discussed above. Figure \ref{figureone} shows a plot between $\log_2(\numqueries)$ and $-\log_2(\Gamma(\numqueries))$, for values of $\numqueries$ satisfying $\numqueries\geq 0$, $\numqueries \lfloor \frac{\leakagelength}{\messagelength} \rfloor \leq 2^\messagelength$, $1-(\alpha + \numprobes)/\keylength \geq 0$ and $\Gamma(\numqueries)\leq 1$. From this plot we can see that for the example under consideration, until about $\numqueries = 2^{30}$, any adversary can only have a slightly higher advantage than 12 times the advantage obtained using the naive strategy.
\begin{figure}[!h]
	\centering
	\includegraphics[scale=0.9]{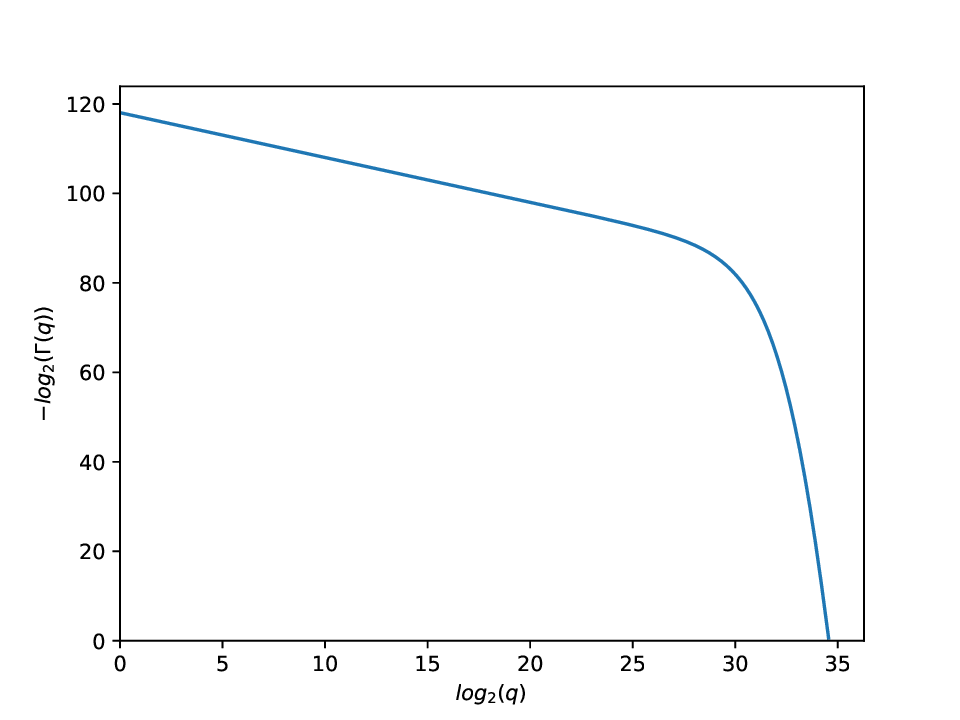}
	\caption{Plot of $-\log_2\big(\Gamma(\numqueries)\big)$ vs $\log_2(\numqueries)$ for a Particular Example.}
	\label{figureone}
\end{figure} \\
 \\
 If a bound avoiding the use of $h^{-1}$ is desired, we can make use of Lemma $\ref{inversebound}$ in Section $\ref{bern}$ which states
 \[
	h^{-1}(z) \leq \half + \half \sqrt{1 - z^{\ln 4}}. \,
	\]
 This gives the following bound on the adversary's advantage:
 \[
	\advantage \leq \frac{\numqueries}{\s+1}\bigg(\frac{4\messagelength\numqueries}{2^\messagelength} \bigg) ^\s
	+
	{\numqueries\totaltime \over 2}
	\left[ \half + \half \sqrt{1 - \Bigl(1 - {\alpha + \numprobes \over \keylength} \Bigr)^{\ln 4}} \right]^{\numprobes/2} + \frac{ \numqueries \numoraclecalls }{2^{\messagelength-1}} + \frac{\numqueries\totaltime}{2^\messagelength} , \,
	\]
\section{Entropy Background and Notation}
Let $X,Y$ be two random variables. Then let $\L(X)$ and $\L(X \given Y)$ denote the law of $X$ and the law of $X$ given $Y$ respectively. For example, let $X$ be a uniform random variable over $\{0,1\}^{\keylength_0}$ and suppose $\leak:\{0,1\}^{\keylength_0} \to \{0,1\}^{\leakagelength_0}$. We write $\L(X \given \leak(X))$ for the {\it random} probability measure $p$ defined by
\[
p(x) :=
\left\{\begin{array}{ll}
	{1 \over | \leak^{-1}( \leak(X))| }           & \mbox{if $\leak(x) = \leak(X)$;} \\[5pt]
	0           & \mbox{otherwise.} \\
\end{array}
\right.
\]
That is, if $\leak(X) = \leakageval$, then $\L(X \given \leak(X))$ is the uniform distribution
over $\{ x \in \{0,1\}^{\keylength_0} : \leak(x) = \leakageval \}$. \\
~\\
Let $\ent$ be the entropy base $2$, that is
\[
\ent(p) := \sum_{x \in \Omega} p(x) \log_2 {1 \over p(x)}  \,.
\]
For a set $S$, define $\ent(S) := \log_2|S|$. That is, $\ent(S)$ is the entropy of the uniform distribution over $S$. 
\begin{lemma}
	\label{divide}
	Let $X$ be a uniform random variable over $A \subset \{0,1\}^{\keylength_0}$ and suppose $\leak:A \to \leakagevals$, where $|\leakagevals| = 2^{\leakagelength_0}$.  Define $S(X) := \leak^{-1}( \leak(X))$. 
	Then 
	\[  
	\e \Bigl( \ent (S(X)) \Bigr) \geq \log_2 |A| - \leakagelength_0.
	\]  
	Furthermore for any $m \in \mathbb{R}$,
	\[
	\P \Bigl( \ent (S(X))  < \log_2 |A| - \leakagelength_0 - m \Bigr)   \leq 2^{-m} \,.
	\]
\end{lemma}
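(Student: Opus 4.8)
The plan is to prove both statements by analyzing the random variable $\ent(S(X)) = \log_2 |S(X)|$, where $S(X) = \leak^{-1}(\leak(X))$ is the fiber of the leakage map through $X$. The key observation is that since $X$ is uniform on $A$, the probability that $X$ lands in a particular fiber is proportional to that fiber's size, so larger fibers are seen more often. Let me set up notation: partition $A$ into fibers $\{A_\ell\}_{\ell \in \leakagevals}$ where $A_\ell = \leak^{-1}(\ell)$, and write $|A| = \sum_\ell |A_\ell|$. Because $X$ is uniform, $\prob(\leak(X) = \ell) = |A_\ell|/|A|$, and on the event $\{\leak(X) = \ell\}$ we have $\ent(S(X)) = \log_2 |A_\ell|$.

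\textbf{Lower bound on the expectation.} The quantity $\e(\ent(S(X)))$ is exactly $\sum_\ell (|A_\ell|/|A|) \log_2 |A_\ell|$. I want to show this is at least $\log_2 |A| - \leakagelength_0$. The clean way is to write $\e(\ent(S(X))) = \log_2 |A| - \e\big(\log_2(|A|/|S(X)|)\big)$, and observe that $|A|/|S(X)|$ evaluates to $|A|/|A_\ell|$ on the fiber indexed by $\ell$, which is $1/\prob(\leak(X)=\ell)$. Hence $\e\big(\log_2(|A|/|S(X)|)\big) = \sum_\ell (|A_\ell|/|A|)\log_2(|A|/|A_\ell|) = \ent(\L(\leak(X)))$, the entropy of the pushforward distribution of the leakage. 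Since this distribution is supported on a set of size at most $|\leakagevals| = 2^{\leakagelength_0}$, its entropy is at most $\log_2 2^{\leakagelength_0} = \leakagelength_0$. Rearranging gives the desired lower bound. The main idea here is simply recognizing that the ``entropy deficit'' $\log_2|A| - \e(\ent(S(X)))$ equals the entropy of the leakage, which is capped by $\leakagelength_0$.

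\textbf{Tail bound.} For the second statement I would use a direct Markov-style argument on the random variable $Z := |A|/(2^{\leakagelength_0}|S(X)|)$. The event $\{\ent(S(X)) < \log_2|A| - \leakagelength_0 - m\}$ is exactly $\{\log_2 |S(X)| < \log_2|A| - \leakagelength_0 - m\}$, equivalently $\{|S(X)| < |A| 2^{-\leakagelength_0 - m}\}$, equivalently $\{Z > 2^m\}$. The crucial computation is $\e(Z) = \sum_\ell (|A_\ell|/|A|)\cdot |A|/(2^{\leakagelength_0}|A_\ell|) = \sum_\ell 2^{-\leakagelength_0} = |\{\ell : A_\ell \neq \emptyset\}|\cdot 2^{-\leakagelength_0} \leq 2^{\leakagelength_0}\cdot 2^{-\leakagelength_0} = 1$, where I only sum over nonempty fibers and use that there are at most $2^{\leakagelength_0}$ of them. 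Applying Markov's inequality to the nonnegative random variable $Z$ then yields $\prob(Z > 2^m) \leq \e(Z)/2^m \leq 2^{-m}$, which is precisely the claimed tail bound.

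\textbf{The main obstacle} I anticipate is not any single step but rather keeping the bookkeeping clean between the two equivalent viewpoints: summing over fibers $\ell$ versus taking expectations over the uniform variable $X$. The payoff-weighting by fiber size is what makes the naive guess (that $\ent(S(X))$ should be around $\log_2|A| - \leakagelength_0$) actually correct in expectation, and the same weighting is what makes $\e(Z) \leq 1$ fall out so cleanly. One subtlety worth flagging is that the sum defining $\e(Z)$ should range only over nonempty fibers; there are at most $2^{\leakagelength_0}$ of these since $|\leakagevals| = 2^{\leakagelength_0}$, and this is the only place the hypothesis on $|\leakagevals|$ is used for the tail bound. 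Both parts are ultimately one-line identities once the pushforward-entropy interpretation is in place, so I expect the proof to be short.
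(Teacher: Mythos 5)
Your proof is correct and follows essentially the same route as the paper: both arguments partition $A$ into the fibers $S_\leakageval$, bound the expectation by showing the entropy deficit is at most $\leakagelength_0$ (your ``entropy of the pushforward is at most $\log_2|\leakagevals|$'' is exactly the paper's Jensen step on $x\log x$), and prove the tail bound by observing that the bad fibers are few and each is hit with small probability. Your repackaging of the tail bound as Markov's inequality applied to $Z = |A|/(2^{\leakagelength_0}|S(X)|)$ with $\e(Z)\leq 1$ is a clean equivalent of the paper's direct summation over fibers with $|S_\leakageval| < |A|\cdot 2^{-\leakagelength_0-m}$.
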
    
\begin{proof}
	For $\leakageval \in \leakagevals$, let $S_\leakageval := \{ x: \leak(x) = \leakageval \}$. Note that if $X \in S_\leakageval$ then $S(X) = S_\leakageval$. It follows that
	\begin{eqnarray}
		\e \Bigl( \ent (S(X)) \Bigr)
		&=&  \sum_{\leakageval \in \leakagevals}
		{|S_\leakageval| \over |A|} \log_2 |S_\leakageval|  \nonumber  \\
		&=&  \log_2 |A| + \sum_{\leakageval \in \leakagevals}
		{|S_\leakageval| \over |A|} \log_2 {|S_\leakageval| \over |A|}  \nonumber  \\
		&=&  \log_2 |A| +  \,|\leakagevals| \Bigl[ {1 \over |\leakagevals|}      \sum_{\leakageval \in \leakagevals}
		{|S_\leakageval| \over |A|} \log_2 {|S_\leakageval| \over |A|}    \Bigr] \,. \label{lucy}
	\end{eqnarray}
	The average (over $\leakageval$) of the quantity ${|S_\leakageval| \over |A|}$ is ${1 \over |\leakagevals|}$. Therefore, since the function $x \log x$ is convex, Jensen's inequality implies that the quantity (\ref{lucy}) is at least
	\[   
	\log_2 |A| +  |\leakagevals| \left( {1 \over |\leakagevals|}  \log_2\left( {1 \over |\leakagevals|} \right)\right) 
	= \log_2 |A| - \leakagelength_0 \,.
	\]
	For the second part of the lemma, note that
	\begin{eqnarray*}
		\P\Bigl( \ent( S(X) ) < \log_2 |A| - \leakagelength_0 - m \Bigr) &=& \P( |S(X)| <  |A| \cdot 2^{-\leakagelength_0 -m} ) \\
		&=& \sum {|S_\leakageval| \over |A|}, \\
	\end{eqnarray*}
	where the sum is over $\leakageval$ such that $|S_\leakageval| < |A| \cdot 2^{-\leakagelength_0-m}$. Since each term in the sum is at most $2^{-\leakagelength_0-m}$ and there are at most $2^{\leakagelength_0}$ terms, the sum is at most $2^{-m}$.
\end{proof}

\section{Entropy and Bernoulli Random Variables}  \label{bern}
Let $h(p)$ denote the entropy of a Bernoulli($p$) random variable. That is,
define $h:[0,1] \to [0,1]$ by
\[
h(p) = p \log_2{1 \over p} + (1-p) \log_2{1 \over 1-p} \,.
\]
The restriction of $h$ to $[\half, 1]$ is a strictly decreasing and onto function and hence has an inverse
$h^{-1}:[0,1] \to [\half, 1]$. Since $h$ is concave and decreasing on $[\half, 1]$, the function $h^{-1}$ is concave.
Furthermore, note that for any $p \in [0,1]$, we have $h(p) = h(1-p)$ and hence
\begin{equation}
	\label{max}
	h^{-1}(h(p)) = \max(p, 1-p) \,.
\end{equation}
Theorem 1.2 of \cite{entropy} gives the
following bound:
\begin{equation}
	\label{ebound}
	h(p) \leq (4pq)^{1/\ln4},
\end{equation}
where $q = 1-p$. This implies the following lemma. 
\begin{lemma} \label{inversebound}
	For any $p \in [0,1]$, we have
	\[
	p \leq \half \left( 1 + \sqrt{1 - h(p)^{\ln 4}} \right)
	\]
\end{lemma}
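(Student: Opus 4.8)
The plan is to prove Lemma~\ref{inversebound} by simply inverting the bound~(\ref{ebound}) provided by Theorem~1.2 of \cite{entropy}. The inequality~(\ref{ebound}) states that $h(p) \leq (4pq)^{1/\ln 4}$ where $q = 1-p$, and the goal is to extract an upper bound on $p$ in terms of $h(p)$. First I would raise both sides of~(\ref{ebound}) to the power $\ln 4$, which is legitimate and order-preserving since both sides lie in $[0,1]$ and $x \mapsto x^{\ln 4}$ is increasing on $[0,1]$. This yields
\[
h(p)^{\ln 4} \leq 4pq = 4p(1-p).
\]

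Next I would solve this quadratic inequality for $p$. Rearranging $4p(1-p) \geq h(p)^{\ln 4}$ gives $4p^2 - 4p + h(p)^{\ln 4} \leq 0$, so $p$ must lie between the two roots of the quadratic $4p^2 - 4p + h(p)^{\ln 4}$. The roots are $\half\bigl(1 \pm \sqrt{1 - h(p)^{\ln 4}}\,\bigr)$, and in particular $p$ is at most the larger root:
\[
p \leq \half\left(1 + \sqrt{1 - h(p)^{\ln 4}}\,\right),
\]
which is exactly the claimed inequality. I should note that the discriminant $1 - h(p)^{\ln 4}$ is nonnegative precisely because $h(p) \leq 1$, so the square root is real-valued for every $p \in [0,1]$, and the quadratic has real roots.

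There is essentially no hard part here, since the lemma is a routine algebraic consequence of the cited bound; the only point requiring a moment of care is confirming that raising to the power $\ln 4 \approx 1.386$ preserves the inequality (it does, because both sides are in $[0,1]$ where this map is monotone increasing) and that the quadratic inequality genuinely forces $p$ below the \emph{larger} root rather than merely between the two roots. The latter is automatic: the claimed bound is an upper bound, and since $p \leq$ (larger root) whenever $p$ lies in the admissible interval between the roots, the inequality holds as stated. I would present the two displayed steps above and conclude.
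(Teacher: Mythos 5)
Your proof is correct and is essentially the same as the paper's: both start from the bound $h(p) \leq (4pq)^{1/\ln 4}$, deduce $4p(1-p) \geq h(p)^{\ln 4}$, and invert this to bound $p$ by $\half\left(1 + \sqrt{1 - h(p)^{\ln 4}}\right)$. The paper phrases the inversion by setting $\Delta = \max(p,1-p) - \half$ and bounding $\Delta$, while you solve the quadratic inequality directly, but these are the same computation.
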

\begin{proof}
	Let $\Delta = \max(p, 1-p) - \half$. Then
	\begin{eqnarray*}
		pq &=& \left(\Half + \Delta\right)\left(\Half - \Delta\right) \\
		&=& \Quarter - \Delta^2,
	\end{eqnarray*}
	and hence
	\begin{equation}
		\label{delta}
		4pq = 1 - 4 \Delta^2.
	\end{equation}
	Equation (\ref{ebound}) implies that
	\[
	4pq \geq h(p)^{\ln 4}.
	\]
	Combining this with (\ref{delta}) gives
	\[
	\Delta^2 \leq {1 \over 4} (1 - h(p)^{\ln 4}),
	\]
	and hence
	\[
	\Delta \leq {1 \over 2} \sqrt{1 - h(p)^{\ln 4}} \,.
	\]
	It follows that
	\begin{eqnarray*}
		p &\leq& {1 \over 2} + \Delta \\
		&\leq& {1 \over 2} + {1 \over 2} \sqrt{1 - h(p)^{\ln 4}} \\
		&=& {1 \over 2} \left(1 + \sqrt{1 - h(p)^{\ln 4}} \right).
	\end{eqnarray*}
\end{proof} 

Recall that $h(p)$ is the entropy of a Bernoulli($p$) random variable and if $S \subset \{0,1\}^\keylength$ then $\ent(S) := \log_2 |S|$. We shall need the following entropy decomposition lemma.
\begin{lemma}\label{decomposition}
	Suppose that $S \subset \{0,1\}^\keylength$ and suppose that $\keyvariable$ is uniformly distributed over $S$. 
	Then
	\begin{equation*}
		\sum_{i=1}^\keylength h \left( \prob(\keyvariable[i] = 1) \right) \geq \ent(S) .
	\end{equation*}
\end{lemma}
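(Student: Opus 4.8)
The plan is to recognize both sides of the inequality as Shannon entropies and reduce the lemma to the subadditivity of entropy. Write $\keyvariable = (\keyvariable[1], \dots, \keyvariable[\keylength])$ and set $p_i := \prob(\keyvariable[i] = 1)$. Since $\keyvariable$ is uniform over $S$, the entropy of its law is $\ent\bigl(\L(\keyvariable)\bigr) = \log_2|S| = \ent(S)$, which is the right-hand side. Moreover each coordinate $\keyvariable[i]$ is a Bernoulli$(p_i)$ random variable, so the entropy of its law is exactly $h(p_i)$, and the left-hand side is the sum of these marginal entropies. Thus the lemma is precisely the statement that the joint entropy of $\keyvariable$ is at most the sum of the marginal entropies of its coordinates.

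To establish this I would use Gibbs' inequality (nonnegativity of relative entropy), which matches the Jensen-based style already used in Lemma \ref{divide}. Let $P := \L(\keyvariable)$ be the uniform law on $S$, and let $Q$ be the product measure on $\{0,1\}^\keylength$ under which the coordinates are independent with $Q(\keyvariable[i]=1) = p_i$; explicitly $Q(x) = \prod_{i=1}^\keylength p_i^{x_i}(1-p_i)^{1-x_i}$. Gibbs' inequality gives $\sum_{x} P(x)\log_2\frac{P(x)}{Q(x)} \ge 0$, that is,
\[
\ent(S) \;=\; \ent(P) \;=\; -\sum_x P(x)\log_2 P(x) \;\le\; -\sum_x P(x)\log_2 Q(x).
\]
It then remains to compute the right-hand side. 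Since $-\log_2 Q(x) = \sum_{i=1}^{\keylength}\bigl[\,x_i\log_2\frac{1}{p_i} + (1-x_i)\log_2\frac{1}{1-p_i}\,\bigr]$ and $\e_P[x_i] = \prob(\keyvariable[i]=1) = p_i$, taking the expectation over $x \sim P$ term by term yields exactly $\sum_{i=1}^\keylength h(p_i)$. Combining this with the previous display gives $\ent(S) \le \sum_{i=1}^\keylength h(p_i)$, as desired.

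The only genuine inequality is Gibbs' inequality, which follows from Jensen applied to the concave map $t \mapsto \log_2 t$; equivalently one could argue via the entropy chain rule together with the fact that conditioning does not increase entropy. The remaining work is the two routine identities ($\ent(\L(\keyvariable)) = \log_2|S|$ and the expectation of $-\log_2 Q$), handled with the usual convention $0\log_2 0 = 0$ to cover the degenerate cases $p_i \in \{0,1\}$ (where $h(p_i)=0$ and the coordinate is deterministic). I do not expect any of these to present a real obstacle: the entire content of the lemma is in recognizing it as subadditivity of entropy, after which the proof is a short Gibbs' inequality computation.
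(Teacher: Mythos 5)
Your proof is correct, but it takes a different route from the paper. The paper proves the lemma by applying the chain rule for entropy to $\keyvariable=(\keyvariable[1],\ldots,\keyvariable[\keylength])$, writing $\ent(S)=\sum_{i=1}^{\keylength}\ent\bigl(\keyvariable[i]\given \keyvariable[i-1],\ldots,\keyvariable[1]\bigr)$, and then using the fact that conditioning does not increase entropy to bound each term by $\ent(\keyvariable[i])=h(\prob(\keyvariable[i]=1))$. You instead prove subadditivity directly via Gibbs' inequality: comparing the uniform law $P$ on $S$ against the product $Q$ of its marginals and computing $-\sum_x P(x)\log_2 Q(x)=\sum_i h(p_i)$. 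The two arguments are essentially dual presentations of the same fact (the gap $\sum_i \ent(\keyvariable[i]) - \ent(S)$ is exactly the relative entropy of $P$ with respect to $Q$), but yours is self-contained modulo a single application of Jensen's inequality, whereas the paper's is shorter at the cost of invoking two standard facts (the chain rule and $\ent(Z\given Z')\leq \ent(Z)$) as black boxes --- you correctly note the chain-rule route as an equivalent alternative. Your handling of the degenerate cases $p_i\in\{0,1\}$ is also sound: when $p_i=1$ every $x$ in the support of $P$ has $x_i=1$, so $Q$ does not vanish on the support of $P$ and the $0\log_2 0=0$ convention suffices.
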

\begin{proof}
	Note  that $\ent(S)$ is the entropy of $\keyvariable$.
	Applying the chain rule for entropy on $\keyvariable=(\keyvariable[1],\ldots,\keyvariable[\keylength])$ gives
	\begin{equation*}
		\sum_{i=1}^{\keylength} \ent((\keyvariable[i] \given \keyvariable[i-1],\keyvariable[i-1],\ldots,\keyvariable[1]\big)) = \ent(S) \, .
	\end{equation*}
	It is well known that for any two discrete random variables $Z,Z'$ on a common probability space, $\mathbf{H}(Z|Z')\leq \mathbf{H}(Z)$. So the above inequality gives
	\begin{equation*}
		\sum_{i=1}^\keylength \ent((\keyvariable[i])) \geq \sum_{i=1}^{\keylength} \mathbf{H}\big(\keyvariable[i]\ \big|\ \keyvariable[i-1],\ldots,\keyvariable[1]\big) = \ent(S) \, .
	\end{equation*}
	Finally, note that $\ent( \keyvariable[i])  = h \left( \prob(\keyvariable[i] = 1) \right)$.
\end{proof}

\section{Main Technical Results}\label{technical_section}
\begin{lemma}\label{subprobelemma}
	Let $Y=(Y_1,Y_2,\ldots,Y_n)\in \{0,1 \}^n$ be a random n-bit string. For $S\subseteq \{1,\ldots,n\}$, set $f_S(Y):=(-1)^{\oplus_{i\in S} Y_i}$, with the convention that $f_{\emptyset}\equiv 1$. Also let 
	\begin{eqnarray}
		\Ecal(S) &=& \E [f_S(Y)]     \nonumber  \\
		&=&  2 \Bigl( \half - \prob ( \oplus_{i\in S} Y_i = 1) \Bigr) \; .    \label{tvrel}
	\end{eqnarray}
	Then for a uniformly chosen random subset $\Scal\subseteq\{1,\ldots, n\}$, we have
	\begin{equation}
		\E\big[\Ecal(\Scal)^2] = \sum_{y\in \{0,1\}^n} \probb{Y=y}^2.
	\end{equation}
\end{lemma}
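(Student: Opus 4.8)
The plan is to recognize $\Ecal(S)$ as (a rescaling of) a Fourier--Walsh coefficient of the law of $Y$ and to apply Parseval's identity. Write $p(y) := \probb{Y=y}$, so that by definition $\Ecal(S) = \E[f_S(Y)] = \sum_{y \in \{0,1\}^n} p(y)\,(-1)^{\oplus_{i \in S} y_i}$. The first step is to observe that since $\Scal$ is chosen uniformly among the $2^n$ subsets of $\{1,\ldots,n\}$, the left-hand side is just an average over all subsets:
\[
\E\big[\Ecal(\Scal)^2\big] = \frac{1}{2^n}\sum_{S \subseteq \{1,\ldots,n\}} \Ecal(S)^2.
\]
It therefore suffices to evaluate the sum $\sum_S \Ecal(S)^2$.

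Next I would expand the square using the formula for $\Ecal(S)$ above, obtaining a double sum
\[
\Ecal(S)^2 = \sum_{y,\, y' \in \{0,1\}^n} p(y)\, p(y')\,(-1)^{\oplus_{i\in S} y_i}\,(-1)^{\oplus_{i\in S} y'_i}.
\]
The key algebraic observation is that $(-1)^{\oplus_{i\in S} y_i}(-1)^{\oplus_{i\in S} y'_i} = (-1)^{\oplus_{i \in S}(y_i \oplus y'_i)}$, since the squared signs cancel; writing $z = y \oplus y'$ for the bitwise XOR, this collapses to $(-1)^{\oplus_{i \in S} z_i}$. After exchanging the order of summation I am left with evaluating $\sum_S (-1)^{\oplus_{i\in S} z_i}$ for each fixed $z$.

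The heart of the argument, and the step I expect to be the crux, is the character orthogonality relation. Summing over all subsets $S$ and using that each coordinate is independently either in $S$ or not gives
\[
\sum_{S \subseteq \{1,\ldots,n\}} (-1)^{\oplus_{i \in S} z_i} = \prod_{i=1}^n \bigl(1 + (-1)^{z_i}\bigr),
\]
which equals $2^n$ when $z = 0$ (that is, $y = y'$) and $0$ otherwise. Substituting this back, every off-diagonal term vanishes and only the terms with $y=y'$ survive, yielding $\sum_S \Ecal(S)^2 = 2^n \sum_{y} p(y)^2$. Dividing by $2^n$ as in the first step produces $\E[\Ecal(\Scal)^2] = \sum_{y \in \{0,1\}^n} p(y)^2$, which is exactly the claimed identity. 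The whole computation is an instance of Parseval/Plancherel for the Walsh transform; the only genuinely substantive point is the vanishing of the cross terms via the orthogonality of the characters $y \mapsto (-1)^{\oplus_{i \in S} y_i}$, so I would take care to justify the product-formula step cleanly and leave the rest as routine rearrangement.
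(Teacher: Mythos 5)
Your proof is correct and is essentially the paper's argument: both rest on Parseval for the Walsh transform, with the cross terms killed by the orthogonality of the characters $y \mapsto (-1)^{\oplus_{i\in S} y_i}$. The only difference is presentational --- the paper sets up the inner product space and an orthonormal basis $\{f_S\}$ and then invokes Parseval abstractly, while you unwind the same computation by expanding $\Ecal(S)^2$ as a double sum and evaluating $\sum_S (-1)^{\oplus_{i\in S} z_i} = \prod_i\bigl(1+(-1)^{z_i}\bigr)$ directly.
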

Lemma
\ref{subprobelemma} is a well-known consequence of Parseval's theorem (see page 24 of \cite{odonnell}). For completeness, we give a proof here: \\
~\\
\begin{proof}
	Let $\Omega = \{0,1\}^n$. Note that $\mathbbm{R}^\Omega$, the space of real valued functions on $\Omega$, forms a vector space of dimension $|\Omega|$ over $\mathbbm{R}$. Define the following inner product on $\mathbbm{R}^\Omega$.
	\begin{equation*}
		\begin{split}
			\langle f,g\rangle &= \frac{1}{2^n}\sum_{x\in\Omega}f(x)g(x) \quad\text{for } f,g\in \mathbbm{R}^\Omega\\
			&=\E[f(Z)g(Z)],
		\end{split}
	\end{equation*}
	where $Z=(Z_1,\ldots,Z_n)$ and $Z_1,Z_2,\ldots,Z_n$ are i.i.d Bernoulli(1/2) random variables. Observe that when $S\neq S'$,
	\begin{equation*}
		\langle f_S, f_{S'}\rangle =\E [f_S(Z)f_{S'}(Z)]=\E\Bigg[\prod_{i\in S\cap S'}(-1)^{2Z_i}\prod_{j\in S\triangle S'}(-1)^{Z_j}\Bigg]=\prod_{i\in S\cap S'}\E\Big[(-1)^{2Z_i}\Big]\prod_{j\in S\triangle S'}\E\Big[(-1)^{Z_j}\Big]=0,
	\end{equation*}
	since $\E [(-1)^{Z_i}]=0$ and $S\triangle S'$ is non-empty when $S\neq S'$. Also observe that
	\begin{equation*}
		\langle f_S, f_S\rangle = \E\Big[(-1)^{2(\oplus_{i\in S} Z_i)} \Big] = \E[1] = 1.
	\end{equation*}
	Therefore, $\{f_S\}_{S\in 2^{[n]}}$ forms an orthonormal basis for $2^\Omega$. Next, let $U(y)=1/2^n$
	and $P(y)=\probb{Y=y}$ for $y\in\Omega$. Then, $P,U,P/U\in\mathbb{R}^\Omega$. Now note that
	\begin{equation*}
		\langle P/U, f_S\rangle = \frac{1}{2^n}\sum_{x\in \Omega} 2^n P(x) f_S(x) = \sum_{x\in\Omega} P(x)f_S(x) = \E[f_S(Y)]=\Ecal(S) \, .
	\end{equation*}
	It follows that
	\begin{equation*}
		\frac{1}{2^n} \langle P/U, f_S\rangle^2 = \frac{1}{2^n}\Ecal(S)^2.
	\end{equation*}
	Summing the above equation over all subsets $S\subseteq \{1,\ldots,n\}$ and using the fact that $f_S's$ form an orthonormal basis, we get
	\begin{equation*}
		\frac{1}{2^n}\langle P/U, P/U\rangle =\sum_{S\subseteq\{1,\ldots, n\}}\frac{1}{2^n}\langle P/U, f_S\rangle^2 =
		\sum_{S\subseteq\{1,\ldots, n\} }\frac{1}{2^n}\Ecal(S)^2 = \E\big[\Ecal(\Scal)^2\big].
	\end{equation*}
	The left hand side of the above equation simplifies to $\sum_{y\in\Omega} P(y)^2$ and hence the proof is complete.
\end{proof}
Note that $\Ecal(S)$ is a measure of the bias in the parity of the bits of $Y$ whose positions are in $S$. More precisely, recall that for  probability distributions $\mu$ and $\nu$ on a finite set $\Omega$, the total variation distance 
\begin{equation}
	\lVert \mu - \nu\rVert_{TV} := \frac{1}{2}\sum_{x\in\Omega} |\mu(x)-\nu(x)|.
\end{equation}
For a $\{0,1\}$-valued random variable $W$, the total variation distance
\[
\lVert W - \coinflip  \rVert_{TV} = 
| \prob(W = 1) - \half | \; .
\]
Equation (\ref{tvrel}) implies that
\[
\half - \prob ( \oplus_{i\in S} Y_i = 1) = \half \Ecal(S),
\]
and hence
\begin{equation}
	\lVert  \oplus_{i\in S} Y_i - \coinflip   \rVert_{TV} = \half | \Ecal(S) |.   \label{tvr}
\end{equation}
For $S\subseteq \{1,\ldots,n\}$, define $\bias(S) := \lVert  \oplus_{i\in S} Y_i - \coinflip \rVert_{TV}$. 
Then
\begin{eqnarray*}
	\e( \bias(\Scal)) &=& \half \e | \Ecal(\Scal) | \\
	&\leq& \half \sqrt { \e( \Ecal( \Scal )^2 )  } \\
	&=&
	\half \Biggl[ \sum_{y\in \{0,1\}^n} \probb{Y=y}^2 \Biggr]^{1/2},
\end{eqnarray*}
where the first line follows from equation (\ref{tvr}), the second line follows from Jensen's inequality and the third line follows from Lemma \ref{subprobelemma}. This leads to the following:
\begin{corollary}\label{tvltwo}
	Let $\keyvariable$ be a random string in $\{0,1\}^\keylength$. Let $(p_1,\ldots, p_\numprobes)$ be a choice of probes. Let  $c'$ be a Bernoulli($1/2$) random variable, and for $S \subset \{1, \dots, \numprobes\}$, define
	\[
	c(S) := \oplus_{i\in S} \keyvariable[p_i];  \hspace{.5 in}
	d(S) := \lVert c(S) - c' \rVert_{TV}.
	\]
	If $\Scal$ is a uniform random subset of $\{1, 2, \dots, \numprobes\}$ then
	\begin{equation*}
		\E (d (\Scal))   \leq \frac{1}{2}\E\Bigg[\sqrt{\sum_{y\in\{0,1\}^\numprobes} \probp{\keyvariable[p_1, \dots, p_\numprobes] =y\ }^2 \ }\Bigg].
	\end{equation*}
\end{corollary}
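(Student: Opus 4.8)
The plan is to read Corollary \ref{tvltwo} as a direct specialization of the chain of inequalities displayed just above its statement, applied to the specific random string $Y := \keyvariable[p_1,\ldots,p_\numprobes] = (\keyvariable[p_1],\ldots,\keyvariable[p_\numprobes]) \in \{0,1\}^\numprobes$. First I would match the objects of the corollary with those of Lemma \ref{subprobelemma} and the surrounding discussion: for $S \subseteq \{1,\ldots,\numprobes\}$ we have $c(S) = \oplus_{i \in S}\keyvariable[p_i] = \oplus_{i\in S} Y_i$, which is precisely the parity random variable whose bias is measured by $\bias(S) = \lVert \oplus_{i\in S} Y_i - \coinflip \rVert_{TV}$. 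Likewise $\Scal$, the uniform random subset of $\{1,\ldots,\numprobes\}$, plays the role of the uniform random subset in Lemma \ref{subprobelemma}.

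Next I would verify that $d(S) = \bias(S)$ for every $S$. Since $c'$ is a $\coinflip$ random variable, the total variation distance $\lVert c(S) - c' \rVert_{TV}$ depends only on the laws of $c(S)$ and $c'$; it therefore equals $\lVert c(S) - \coinflip \rVert_{TV} = |\,\prob(c(S) = 1) - \half\,|$, which is exactly $\bias(S)$. In particular $d(\Scal) = \bias(\Scal)$.

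With these identifications, the estimate is immediate from the display preceding the corollary, which chains the identity $\bias(S) = \half |\Ecal(S)|$ from (\ref{tvr}), Jensen's inequality $\e|\Ecal(\Scal)| \le \sqrt{\e(\Ecal(\Scal)^2)}$, and the identity $\e(\Ecal(\Scal)^2) = \sum_{y}\prob(Y = y)^2$ of Lemma \ref{subprobelemma}. Substituting $Y = \keyvariable[p_1,\ldots,p_\numprobes]$ gives $\e(d(\Scal)) = \e(\bias(\Scal)) \le \half\big(\sum_{y \in \{0,1\}^\numprobes} \probp{\keyvariable[p_1,\ldots,p_\numprobes] = y}^2\big)^{1/2}$.

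The only point requiring care is the outer expectation $\E[\,\cdot\,]$ on the right-hand side. If the law of $\keyvariable$ (and the probes) is genuinely fixed, the inner sum is a constant and the outer expectation is vacuous, so the displayed bound already is the claim. In the intended application, however, the relevant law is a conditional one --- $\keyvariable$ conditioned on the leakage, with the probes drawn by the random oracle --- so the sum $\sum_y \probp{\keyvariable[p_1,\ldots,p_\numprobes]=y}^2$ is itself random. I would then apply the bound of the previous paragraph pointwise, i.e.\ conditionally on that auxiliary randomness, and take expectations, using the tower property to obtain $\e(d(\Scal)) \le \half\,\E\big[\sqrt{\sum_y \probp{\keyvariable[p_1,\ldots,p_\numprobes]=y}^2}\big]$. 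I expect this bookkeeping --- ensuring that $d(\Scal)$ and the sum of squared probabilities are formed with respect to the same conditional law before the outer expectation is taken --- to be the only real subtlety; all of the analytic content already resides in Lemma \ref{subprobelemma} and the display that follows it.
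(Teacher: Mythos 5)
Your proposal is correct and follows essentially the same route as the paper, which derives the corollary directly from the displayed chain $\e(\bias(\Scal)) = \half\,\e|\Ecal(\Scal)| \le \half\sqrt{\e(\Ecal(\Scal)^2)} = \half\bigl[\sum_y \probb{Y=y}^2\bigr]^{1/2}$ with $Y = \keyvariable[p_1,\ldots,p_\numprobes]$; the paper offers no further argument beyond ``this leads to the following.'' Your extra care about the outer expectation (applying the bound conditionally and using the tower property when the law of $\keyvariable$ is itself random) is a faithful and correct reading of why the right-hand side carries an $\E[\,\cdot\,]$.
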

\noindent
This shows that the expectation (taken over the subprobes) of the  distance between the random bit $c$ and a Bernoulli($1/2$) random variable can be bounded in terms of the  $l^2$-norm of the distribution of $\keyvariable[p_1, \dots, p_\numprobes]$.

\section{Main Lemma} \label{Randomnessofbitsection}
Suppose $\keyvariable$ is a uniform random element of $\{0,1\}^\keylength$ and  suppose $\leak: \{0, 1\}^\keylength \to \leakagevals$. For $\leakageval \in \leakagevals$, let $S_\leakageval := \leak^{-1}(\leakageval)$.  Define the probability measure $\probl$ by
\[
\probl(\, \cdot ) := \prob(\, \cdot \given \leak(\keyvariable) = \leakageval) \,,
\]
and write $\el$ for the expectation operator with respect to $\probl$. Note that under $\probl$, the distribution of $\keyvariable$ is uniform over $S_\leakageval$. 
For an integer $r$ with $1 \leq r \leq \numprobes$ and probes $p_1, \dots, p_r$
define
\[
\g_\leakageval(p_1, \dots, p_r) := \sum_{x \in \{0,1\}^r} \probl( \keyvariable[p_1, \dots, p_r] = x)^2 \,.
\]
\begin{lemma}    \label{mainlemma}
	Suppose that probes $P_1, P_2, \dots, P_\numprobes$ are chosen independently and 
	uniformly at random from $\{1, 2, \dots, \keylength\}$. If $\ent(S_\leakageval) \geq \keylength - \alpha$, then
	\[
	\el ( \g_\leakageval(P_1, \dots, P_\numprobes)) \leq  \Bigl[ h^{-1} \Bigl(1 - {\alpha + \numprobes \over \keylength} \Bigr) \Bigr]^\numprobes \,.
	\]
\end{lemma}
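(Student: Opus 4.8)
The plan is to reduce the averaged collision probability to a single per-coordinate estimate via a conditioning recursion on the probes, and then to run an induction on $\numprobes$ in which each conditioning step costs at most one bit of entropy. Since the probes come from the random oracle and are independent of $\keyvariable$, conditioning on $\leak(\keyvariable)=\leakageval$ does not change their law, so $\el(\g_\leakageval(P_1,\dots,P_\numprobes))$ is just the expectation of $\g_\leakageval$ over uniform independent probes. For a subset $S\subseteq\{0,1\}^\keylength$ let $\mu_S$ be the uniform law on $S$ and define
\[
F_r(S):=\mathbb{E}_{P_1,\dots,P_r}\sum_{x\in\{0,1\}^r}\mu_S\big(\keyvariable[P_1,\dots,P_r]=x\big)^2,
\]
so that the quantity to bound is $F_\numprobes(S_\leakageval)$. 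Conditioning on the first probe $P_1=i$ and on the revealed bit $b=\keyvariable[i]$ splits $S$ into $S^0_i,S^1_i$ with weights $w^b_i:=|S^b_i|/|S|$, and a one-line computation yields the exact recursion
\[
F_r(S)=\frac{1}{\keylength}\sum_{i=1}^{\keylength}\sum_{b\in\{0,1\}}(w^b_i)^2\,F_{r-1}(S^b_i),\qquad F_0\equiv 1.
\]

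The base case $r=1$ is the heart of the matter. Here $F_1(S)=\frac{1}{\keylength}\sum_i\big(p_i^2+(1-p_i)^2\big)$ with $p_i=\mu_S(\keyvariable[i]=1)$, and writing $m_i=\max(p_i,1-p_i)$ I would use the elementary bound $p_i^2+(1-p_i)^2\le m_i$ (valid because $(1-m_i)^2\le m_i(1-m_i)$ when $m_i\ge\tfrac12$), together with $m_i=h^{-1}(h(p_i))$ from (\ref{max}). Concavity of $h^{-1}$ (Section \ref{bern}) and Jensen's inequality then give $F_1(S)\le h^{-1}\!\big(\frac{1}{\keylength}\sum_i h(p_i)\big)$, and Lemma \ref{decomposition} supplies $\sum_i h(p_i)\ge\ent(S)$; since $h^{-1}$ is decreasing this yields the clean estimate $F_1(S)\le h^{-1}\!\big(\ent(S)/\keylength\big)$.

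I would then prove by induction on $r$ the statement $F_r(S)\le\big[h^{-1}\!\big(\ent(S)/\keylength-r/\keylength\big)\big]^{r}$, using $F_r\le 1$ and $h^{-1}\le 1$ whenever the argument is $\le 0$. Taking $S=S_\leakageval$ and $\ent(S_\leakageval)\ge\keylength-\alpha$ then gives exactly the bound of Lemma \ref{mainlemma}. Feeding the inductive hypothesis for $r-1$ into the recursion, and using $\ent(S^b_i)/\keylength=\ent(S)/\keylength+\log_2 w^b_i/\keylength$, the step reduces to the per-coordinate inequality
\[
\sum_{b\in\{0,1\}}(w^b_i)^2\big[h^{-1}(c+\tfrac{\log_2 w^b_i}{\keylength})\big]^{r-1}\le\big[h^{-1}(c-\tfrac1\keylength)\big]^{r-1}\sum_{b\in\{0,1\}}(w^b_i)^2,\qquad c=\tfrac{\ent(S)}{\keylength}-\tfrac{r-1}{\keylength}.
\]
Summing over $i$, dividing by $\keylength$, and recognizing $\frac{1}{\keylength}\sum_{i,b}(w^b_i)^2=F_1(S)\le h^{-1}(\ent(S)/\keylength)=h^{-1}(c+(r-1)/\keylength)\le h^{-1}(c-1/\keylength)$ supplies the missing $r$-th factor and closes the induction.

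The main obstacle is the per-coordinate inequality above. Note first that $F_\numprobes(S_\leakageval)$ is a higher moment of the fraction of positions on which two independent uniform samples of $S_\leakageval$ agree, so by Jensen it strictly exceeds the ``fresh-probe'' value $F_1(S_\leakageval)^\numprobes$; the whole content of the lemma is that the entropy deficit $\numprobes/\keylength$ in the exponent pays for precisely this convexity gap. The governing first-order fact behind the displayed step is the positivity $\sum_b (w^b_i)^2\log_2(2w^b_i)\ge 0$ for each $i$ (equivalently, conditioning on a probed bit costs at most one bit of entropy on the $(w^b_i)^2$-weighted average). The difficulty is that one cannot obtain the full power-$(r-1)$ statement by a single application of Jensen, since $(h^{-1})^{r-1}$ need not be concave once $\numprobes$ is large; instead I expect to prove the two-term inequality directly, bounding the single ``bad'' term (the $b$ with $w^b_i\le\tfrac12$, which carries the small weight $(w^b_i)^2$) against the ``good'' term using the concavity and monotonicity of $h^{-1}$ and the constraint $w^0_i+w^1_i=1$. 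Controlling these two competing contributions uniformly across all $\numprobes$ levels of the recursion is where the care is needed.
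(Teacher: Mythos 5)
Your overall architecture---the exact first-probe recursion $F_r(S)=\frac{1}{\keylength}\sum_{i=1}^{\keylength}\sum_b (w_i^b)^2 F_{r-1}(S_i^b)$, the base case $F_1(S)\le h^{-1}(\ent(S)/\keylength)$ (which matches the paper's treatment of $\el(\g_\leakageval(P_1))$), and the induction $F_r(S)\le[h^{-1}((\ent(S)-r)/\keylength)]^r$---is coherent and would deliver the lemma. But there is a genuine gap exactly where you flag one: the ``per-coordinate inequality''
\[
\sum_{b\in\{0,1\}}(w^b_i)^2\bigl[h^{-1}\bigl(c+\tfrac{\log_2 w^b_i}{\keylength}\bigr)\bigr]^{r-1}\le\bigl[h^{-1}\bigl(c-\tfrac1\keylength\bigr)\bigr]^{r-1}\sum_{b\in\{0,1\}}(w^b_i)^2
\]
is asserted, not proved, and it is the entire content of the inductive step. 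Your ``governing first-order fact'' $\sum_b (w^b_i)^2\log_2(2w^b_i)\ge 0$ only controls the linearized regime in which $|\log_2 w_i^b|/\keylength$ is small compared to the scale on which $h^{-1}$ varies. Writing the two weights as $w\ge\half$ and $1-w$, when $1-w$ is exponentially small in $\keylength$ the argument $c+\log_2(1-w)/\keylength$ can drop to $0$, the corresponding factor saturates at $1$, and one must show $(1-w)^2\bigl(1-[h^{-1}(c-\tfrac1\keylength)]^{r-1}\bigr)\le w^2\bigl([h^{-1}(c-\tfrac1\keylength)]^{r-1}-[h^{-1}(c+\tfrac{\log_2 w}{\keylength})]^{r-1}\bigr)$, whose right side is only of order $(r-1)\keylength^{-1}$ times a power of $h^{-1}(c)$. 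The inequality is plausibly true, but verifying it uniformly in $r$, $c$ and the weights is delicate, and ``I expect to prove the two-term inequality directly'' does not discharge it. The natural two-step route (Chebyshev/FKG on the two-point measure, then Jensen) fails for $r-1\ge 2$, as you note, since $(h^{-1})^{r-1}$ need not be concave and the power-mean inequality pushes $\sum_b w^b\,[h^{-1}(c+\tfrac{\log_2 w^b}{\keylength})]^{r-1}$ in the wrong direction.

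The paper sidesteps this obstacle by peeling off the \emph{last} probe rather than the first. Conditioning on $\keyvariable[p_1,\dots,p_r]=x$, it writes $\g_\leakageval(p_1,\dots,p_{r+1})=\sum_x\mur(x)^2[\lam_\leakageval^2+(1-\lam_\leakageval)^2]$, bounds the bracket by $h^{-1}(h(\lam_\leakageval))$, averages over $P_{r+1}$, and then decouples the inductive quantity $\sum_x\mur(x)^2$ from the $x$-dependent factor via the Harris--FKG inequality (the identity is increasing and $u\mapsto h^{-1}(\tfrac{1}{\keylength}\log_2(|S_\leakageval|u))$ is decreasing). After that, Jensen is applied to an average of a \emph{single} power of $h^{-1}$, where concavity is available, and Lemma \ref{divide} converts the resulting entropy average into the constant $h^{-1}(1-\tfrac{\alpha+\numprobes}{\keylength})$. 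No power of $h^{-1}$ ever enters a convexity argument, which is precisely the difficulty your first-probe recursion creates for itself. To complete your proof you must either establish the two-point inequality above rigorously (including the regime where one weight is exponentially small), or restructure the recursion so that the factor extracted at each step is a constant and only a first power of $h^{-1}$ is averaged---which, after unwinding, is essentially the paper's argument. As written, the proposal is incomplete.
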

\begin{proof}
Fix $\leakageval$ with $\ent(S_\leakageval) \geq \keylength - \alpha$. For $x \in \{0,1\}^r$, and a choice of probes $p_1, \dots, p_{r+1}$, define
\begin{eqnarray*}
	\lam_\leakageval(x, p_1, \dots, p_{r+1}) &:=& \probl( \keyvariable[p_{r+1}] = 1 \given \keyvariable[p_1 , \dots, p_r] = x )  \\
	&=& \prob( \keyvariable[p_{r+1}] = 1 \given \keyvariable[p_1 , \dots, p_r] = x, \leak(\keyvariable) = \leakageval ) \,.  \\
\end{eqnarray*}
Define $\mur(x) : \probl( \keyvariable[p_1, \dots, p_r] = x)$. Note that conditional on $\leak(\keyvariable) = \leakageval$ and $\keyvariable[p_1, \dots, p_r] = x$, the distribution of $\keyvariable$ is uniform over $S_\leakageval \cap \{A \in \{0,1\}^\keylength : A[p_1, \dots, p_r] = x \}$. Furthermore,
\[
| S_\leakageval \cap \{A \in \{0,1\}^\keylength : A[p_1, \dots, p_r] = x \}| = |S_\leakageval| \cdot \mur(x) \,.
\]
Hence Lemma \ref{decomposition} implies that
\begin{equation}
	\label{ave}
	{1 \over \keylength} \sum_{j=1}^\keylength h( \lam(x, p_1, \dots, p_r, j)) \geq \frac{1}{\keylength} \log_2 \left( |S_\leakageval| \cdot \mur(x) \right) \,.
\end{equation}
For any $p_1, \dots, p_{r+1}$, we have 
\begin{eqnarray}   
	& &  \g_\leakageval( p_1, \dots, p_{r+1})  \\
	&=&  \sum_{y \in \{0,1\}^{r+1}}   \probl( \keyvariable[p_1, \dots, p_{r+1}] = y)^2 \\
	&=&
	\sum_{x \in \{0,1\}^{r}}   \probl( \keyvariable[p_1, \dots, p_{r}] = x)^2 
	\Bigl[ \lam_\leakageval(x, p_1, \dots, p_{r+1})^2 + (1 - \lam_\leakageval(x, p_1, \dots, p_{r+1}))^2 \Bigr ]    \\
	&=&
	\sum_{x \in \{0,1\}^{r}}   \mur(x)^2
	\Bigl[ \lam_\leakageval(x, p_1, \dots, p_{r+1})^2 + (1 - \lam_\leakageval(x, p_1, \dots, p_{r+1}))^2 \Bigr ] \,.  \label{square}
\end{eqnarray}
Note that for any $p \in [0,1]$ we have $p^2 + (1 - p)^2 \leq \max(p, 1-p)$. Hence, the quantity in square brackets in equation (\ref{square})  is at most
\[
h^{-1}( h( \lam_\leakageval(x, p_1, \dots, p_{r+1})))
\]
by equation (\ref{max}). Thus
\begin{eqnarray}   
	\g_\leakageval( p_1, \dots, p_{r+1})
	&\leq&
	\sum_{x \in \{0,1\}^{r}}   \mur(x)^2
	h^{-1}( h( \lam_\leakageval(x, p_1, \dots, p_{r+1}))) \,.  \label{square}
\end{eqnarray}
Recall that the probe $P_{r+1}$ is chosen uniformly at random from $\{1, 2, \dots, \keylength\}$. It follows that 
\begin{eqnarray}   
	\el(\g_\leakageval( p_1, \dots, p_{r}, P_{r+1}))
	&\leq&
	{1 \over \keylength} \sum_{j = 1}^\keylength \sum_{x \in \{0,1\}^{r}}   \mur(x)^2
	h^{-1}( h( \lam_\leakageval(x, p_1, \dots, p_r, j))) \\
	&=&
	\sum_{x \in \{0,1\}^{r}}   \mur(x)^2 \Bigl[
	{1 \over \keylength} \sum_{j = 1}^\keylength
	h^{-1}( h( \lam_\leakageval(x, p_1, \dots, p_r, j))) \Bigr] \, .    \label{seventeen}
\end{eqnarray}
Recall that in Section \ref{bern} we showed that $h^{-1}$ is concave. Thus, Jensen's inequality implies that the quantity (\ref{seventeen}) is at most
\begin{eqnarray}
	& & \sum_{x \in \{0,1\}^{r}}   \mur(x)^2 h^{-1}\Bigl(
	{1 \over \keylength} \sum_{j = 1}^\keylength
	h( \lam_\leakageval(x, p_1, \dots, p_r, j)) \Bigr) \\ 
	&\leq& \sum_{x \in \{0,1\}^{r}}   \mur(x)^2
	h^{-1}\Bigl(
	{1 \over \keylength} \log_2( |S_\leakageval| \cdot \mur(x)) \Bigr), \\     \label{eighteen}
\end{eqnarray}
where the inequality follows from (\ref{ave}) and the fact that $h^{-1}$ is decreasing. Recall that the Harris-FKG inequality (see Section 2.2 of \cite{grimmett}) implies that if $X$ is a random variable and
$f$ (respectively, $g$) is an increasing (respectively, decreasing) function, then
\[
\e( f(X) g(X) ) \leq \e(f(X)) \e(g(X)) \,.
\]
Now, consider the probability measure that assigns mass $\mur(x)$ to each $x \in \{0,1\}^r$, and let $X: \{0,1\}^r \to \reals$ be the random variable defined by $X(x) = \mur(x)$.  Let $f$ be the identity function on $\reals$ and define $g: \reals \to \reals$ by $g(u) = h^{-1}\Bigl( {1 \over \keylength} \log_2( |S_\leakageval| \cdot u) \Bigr)$. Then $f$ is increasing and since $h^{-1}$ is decreasing, $g$ is decreasing. Thus the Harris-FKG inequality implies that the quantity (\ref{eighteen}) is at most
\begin{eqnarray}
	& & \Bigl(  \sum_{x \in \{0,1\}^{r}}   \mur(x)^2 \Bigr)
	\Bigl(  \sum_{x \in \{0,1\}^{r}}   \mur(x)
	h^{-1}\Bigl(
	{1 \over \keylength} \log_2( |S_\leakageval| \cdot \mur(x)) \Bigr) \\
	&\leq&
	\Bigl(  \sum_{x \in \{0,1\}^{r}}   \mur(x)^2 \Bigr)
	h^{-1}\Bigl( {1 \over \keylength}
	\sum_{x \in \{0,1\}^{r}}   \mur(x)
	\log_2( |S_\leakageval| \cdot \mur(x)) \Bigr) \, .     \label{al}
\end{eqnarray}
Applying the first part of Lemma \ref{divide} with $\leak(\keyvariable) = \keyvariable[p_1, p_2, \dots, p_r]$
gives
\begin{eqnarray*}
	\sum_{x \in \{0,1\}^{r}}   \mur(x)
	\log_2( |S_\leakageval| \cdot \mur(x)) &\geq& 
	\ent(S_\leakageval) - r  \\
	&\geq& \keylength - \alpha - \numprobes,
\end{eqnarray*}
where the second inequality follows from the fact that $\ent(S_\leakageval) \geq \keylength - \alpha$ and $r \geq \numprobes$.  It follows that the quantity (\ref{al}) is at most
\begin{eqnarray}
	& &
	\Bigl(  \sum_{x \in \{0,1\}^{r}}   \mur(x)^2 \Bigr)
	h^{-1}\Bigl( {1 \over \keylength}(\keylength - \alpha - \numprobes) \Bigr)  \\
	&=&
	\Bigl(  \sum_{x \in \{0,1\}^{r}}   \mur(x)^2 \Bigr)
	h^{-1}\Bigl( 1 - {\alpha + \numprobes \over \keylength} \Bigr)  \\
	&=&
	\g_l(p_1, p_2, \dots, p_r)
	h^{-1}\Bigl( 1 - {\alpha + \numprobes \over \keylength} \Bigr) \,.
\end{eqnarray}
We have shown that for any choice of $p_1, p_2, \dots, p_r$ we have
\[
\el(\g_\leakageval( p_1, \dots, p_{r}, P_{r+1})) \leq       \g_\leakageval(p_1, p_2, \dots, p_r)
h^{-1}\Bigl( 1 - {\alpha + \numprobes \over \keylength} \Bigr) \,.
\]
It follows that
\[
\el(\g_\leakageval( P_1, \dots, P_{r+1})) \leq  \el(\g_\leakageval(P_1, P_2, \dots, P_r))
h^{-1}\Bigl( 1 - {\alpha + \numprobes \over \keylength} \Bigr) \,.
\]
Since this is true for all $r$ with $1 \leq r \leq \numprobes$, we have
\[
\el(\g_\leakageval( P_1, \dots, P_\numprobes)) \leq  \el(\g_\leakageval(P_1))
\Bigl[ h^{-1}\Bigl( 1 - {\alpha + \numprobes \over \keylength} \Bigr)  \Bigr]^{\numprobes-1} \,.
\]
Finally, an argument similar to above (eliminating all sums over $\{0,1\}^r$ and replacing $\mur(x)$ by $1$) shows that
\[
\el( \g_\leakageval( P_1)) \leq h^{-1}\Bigl( 1 - {\alpha + \numprobes \over \keylength} \Bigr) \,,
\]
and the lemma follows. 
\end{proof}

\section{Proof of Main Theorem}

In this section we will prove Theorem \ref{maintheorem}, which bounds the advantage of a KPA adversary with leakage against the cipher described in section \ref{definition}. Recall that the bound in question is
    \[
	\advantage \leq \frac{\numqueries}{\s+1}\bigg(\frac{4\messagelength\numqueries}{2^\messagelength} \bigg) ^\s
	+
	{\numqueries\totaltime \over 2}
	\Bigl[ h^{-1} \Bigl(1 - {\alpha + \numprobes \over \keylength} \Bigr) \Bigr]^{\numprobes/2} + \frac{ \numqueries \numoraclecalls }{2^{\messagelength-1}} + \frac{\numqueries\totaltime}{2^\messagelength} , \,
	\]
	First, we prove the bound assuming that the adversary makes no random oracle calls.
	Let $( \M_i, \C_i)_{i=1}^\numqueries$ be the uniform random sequence of input/output pairs given to the adversary. \\
	The adversary's advantage satisfies
	\[
	\mathbf{MaxAdv}_{\mathbf{0,\numqueries}} \leq  \lVert (\M_i, C_i)_{i=1}^\numqueries - (\M^u_i, C^u_i)_{i=1}^\numqueries \rVert_{TV},
	\]
	where $(\M^u_i, C^u_i)_{i=1}^\numqueries$ are $\numqueries$ uniform random queries from a uniform random permuation. Let $(\M^\th_i, C^\th_i)_{i=1}^\numqueries$ be $\numqueries$ uniform random queries from $\totaltime$ rounds of an {\it idealized} Thorp shuffle that uses a uniform random round function $F$ instead of a pseudorandom function.
	
	In $\cite{proceedings}$, Morris, Rogaway and Stegers prove the following.
	\begin{theorem}\cite{proceedings}\label{morristhorp}
		Let $\totaltime=\s(2\messagelength-1)$ for some whole number $s$, where $2^\messagelength=|\messagespace|$. Then
		\begin{equation*}
			\lVert (\M^\th_i, C^\th_i)_{i=1}^\numqueries - (\M^u_i, C^u_i)_{i=1}^\numqueries \rVert_{TV}
			\leq \frac{\numqueries}{\s+1}\bigg(\frac{4\messagelength\numqueries}{2^\messagelength} \bigg) ^\s.
		\end{equation*}
	\end{theorem}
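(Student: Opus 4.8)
The plan is to prove this as a pure mixing statement about the idealized Thorp shuffle, via a coupling/coalescence argument applied one card at a time. First I would note that the inputs $\M^\th_i$ and $\M^u_i$ are identically distributed (uniform and distinct) and independent of the shuffle's randomness, so conditioning on the input tuple $M = (M_1,\dots,M_\numqueries)$ reduces the claim to bounding $\e_M\lVert \mathcal{L}(C^\th \given M) - \mathcal{L}(C^u \given M)\rVert_{TV}$. For fixed distinct inputs, $\mathcal{L}(C^u \given M)$ is exactly the uniform distribution over ordered $\numqueries$-tuples of distinct positions in $\messagespace$, which is also the stationary law of the $\numqueries$-card projection of the shuffle. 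Hence it suffices to show that the law of the images of $\numqueries$ fixed distinct cards under $\totaltime$ rounds of the idealized Thorp shuffle lies within the claimed distance of this uniform law.

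Next I would apply the chain rule for total variation, revealing output positions one card at a time:
\[
\lVert \mathcal{L}(C^\th \given M) - \mathcal{L}(C^u \given M)\rVert_{TV} \le \sum_{i=1}^{\numqueries} \e\Bigl[\,\lVert \mathcal{L}(C^\th_i \given C^\th_{<i}, M) - \mathrm{Unif}\rVert_{TV}\,\Bigr],
\]
so it is enough to bound, for each $i$, the distance of the $i$-th card's final position from the uniform distribution over the positions not occupied by the first $i-1$ cards, given their full trajectories. The argument rests on the structure of the idealized shuffle: in round $r$ the two cards sharing a common $(\messagelength-1)$-bit suffix $R$ are reassigned by a single fresh coin $F(R,r)$; these coins are independent across rounds; and therefore a single card's position is \emph{exactly} uniform after any $\messagelength$ consecutive rounds in which it never shares a suffix with a conditioned card. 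I would then couple two copies of the target card's trajectory — one driven by the true conditional law, one by fresh independent coins — that use identical coins except on suffixes the target shares with a conditioned card, and bound $\prob(\text{non-coalescence by time } \totaltime)$.

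The contraction then comes one pass at a time, where a pass is a block of $2\messagelength-1$ consecutive rounds. In a single pass the two coupled target cards coalesce unless, during the window, the target shares a suffix with one of the at most $\numqueries$ conditioned cards; since positions are essentially uniform by the single-card fact, the probability of such a collision over a pass is at most of order $4\messagelength\numqueries/2^\messagelength$. Because $\totaltime = \s(2\messagelength-1)$ consists of $\s$ such passes and non-coalescence requires a collision in \emph{every} pass, this contributes the factor $\bigl(4\messagelength\numqueries/2^\messagelength\bigr)^{\s}$. Summing the per-card bound over the $\numqueries$ cards and carrying out the recursion carefully — rather than using the crude independent-passes product — produces the prefactor $\tfrac{\numqueries}{\s+1}$, giving the stated bound.

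The main obstacle is making the per-pass contraction rigorous, since the target's suffix-collision events are not independent of the evolving positions of the conditioned cards, nor across passes. The delicate points are: (i) verifying that conditioning on the conditioned cards' trajectories preserves enough of the ``fresh coin'' property to keep the exact $\messagelength$-round mixing of a collision-free target; (ii) designing the coupling so that it resets cleanly at each pass boundary, allowing the $\s$ passes to be composed; and (iii) tracking the constant $4\messagelength$ and recovering the precise $\tfrac{1}{\s+1}$ through an induction on the number of passes rather than a geometric bound. Once these are in place, reinstating the conditioned inputs and summing yields exactly $\tfrac{\numqueries}{\s+1}\bigl(4\messagelength\numqueries/2^\messagelength\bigr)^{\s}$.
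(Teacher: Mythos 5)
You should first be aware that the paper contains no proof of this statement: it is imported verbatim from Morris, Rogaway and Stegers \cite{proceedings}, so the relevant comparison is with the argument in that reference. Your overall architecture does match it. They likewise reduce to fixed distinct inputs, reveal the trajectories one query at a time, and bound the conditional law of the $i$-th card's endpoint against the uniform law on the unoccupied positions; they likewise exploit the facts that the target card receives a fresh coin except in rounds where it shares its $(\messagelength-1)$-bit suffix with an already-revealed card, that $\messagelength$ consecutive collision-free rounds render its position uniform, that this uniformity then persists (given the revealed trajectories, each further round acts as a random bijection of the free positions), and hence that the distance is controlled by the probability of suffering a collision in \emph{every} one of the $\s$ blocks of $2\messagelength-1$ rounds. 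Whether one phrases this as a coupling, as you do, or as a direct bound of the total variation distance by the probability of the bad event, as they do, is immaterial.

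The genuine gap is your account of the prefactor $\numqueries/(\s+1)$. You bound the per-block collision probability by (order) $4\messagelength\numqueries/2^\messagelength$, uniformly in the query index, and then assert that a carefully executed ``induction on the number of passes'' recovers the factor $\tfrac{1}{\s+1}$. No recursion over passes can do this: with a per-card bound of $(4\messagelength\numqueries/2^\messagelength)^{\s}$ for each of the $\numqueries$ cards, summing gives $\numqueries\,(4\messagelength\numqueries/2^\messagelength)^{\s}$, which exceeds the claimed bound by exactly the factor $\s+1$. In \cite{proceedings} the factor comes from the query index, not from the passes: when the $i$-th trajectory is revealed only $i-1$ cards are conditioned, so a given block contains a collision with probability at most about $(2\messagelength-1)\cdot 2(i-1)/2^{\messagelength}\le 4\messagelength(i-1)/2^{\messagelength}$, the $i$-th summand is $\bigl(4\messagelength(i-1)/2^{\messagelength}\bigr)^{\s}$, and the elementary estimate $\sum_{i=1}^{\numqueries}(i-1)^{\s}\le \numqueries^{\s+1}/(\s+1)$ yields precisely $\tfrac{\numqueries}{\s+1}\bigl(4\messagelength\numqueries/2^\messagelength\bigr)^{\s}$. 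Your own setup already conditions only on the first $i-1$ trajectories, so the repair is to track $i-1$ rather than $\numqueries$ in the block bound and to replace the mysterious recursion with this summation. The delicate points you flag --- the dependence of collision events on the evolving conditioned trajectories and across blocks, and the need for the per-round collision probability to be computed against a near-uniform position --- are real, and handling them (this is what the block length $2\messagelength-1$, rather than $\messagelength$, and the persistence-of-uniformity observation are for) constitutes the substantive content of the proof in \cite{proceedings}.
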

	Combining this result with a bound on
	\begin{equation}
		\lVert (\M_i, C_i)_{i=1}^\numqueries - (\M^\th_i, C^\th_i)_{i=1}^\numqueries \rVert_{TV}    \label{compthorp}
	\end{equation}
	will give the claimed bound on the adversary's advantage.
	To bound (\ref{compthorp}) we use a hybrid argument. \\
	~\\
	For $0 \leq l \leq q$, let $C^l_i$ be the result of $\totaltime$ Thorp shuffles starting from $M_i$, where the round function
	used to determine the ``random bits'' of the shuffle is defined as follows:
	\begin{enumerate}
		\item If $i \leq l$ then we use the round function $F_\keyval$.
		\item If $l + 1 \leq i \leq \numqueries$, then at any step not already determined by the round functions used to evaluate
		the first $l$ queries, we use a uniform random function $F$.
	\end{enumerate}
	Define $Q_l = ( \M_i, \C^l_i)_{i=1}^\numqueries$. Thus, the first $l$ queries of $Q_l$ correspond to the Thorp shuffle using the pseudorandom round function $F_k$ and the final $q - l$ queries correspond to the Thorp shuffle using a uniform random round function (except at steps that are already ``forced'' by the trajectories of the first $l$ queries.)
	Note that $Q_0$ corresponds to the Thorp shuffle with a uniform random round function and $Q_\numqueries$ corresponds to the Thorp shuffle with the ``big key'' pseudorandom round function $F_k$. Thus, the triangle inequality gives us
	\begin{eqnarray*}
		\lVert (\M_i, C_i)_{i=1}^\numqueries - (\M^\th_i, C^\th_i)_{i=1}^\numqueries \rVert_{TV}
		&\leq& \sum_{k=0}^{\numqueries-1}
		\lVert Q_{k+1} - Q_k \rVert_{TV}
	\end{eqnarray*}
 To bound the terms of this sum, we prove the following lemma:
\begin{lemma}
	\label{qlemma}
	For all $s$ we have
	\[
	\lVert Q_{s+1} - Q_s \rVert_{TV} 
	\leq {\totaltime \over 2}
	\Bigl[ h^{-1} \Bigl(1 - {\alpha + \numprobes \over \keylength} \Bigr) \Bigr]^{\numprobes/2} \, + \totaltime \cdot 2^{-\messagelength}.
	\]
\end{lemma}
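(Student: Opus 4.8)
The plan is to control $\lVert Q_{s+1} - Q_s \rVert_{TV}$ by a coupling in which everything except the round function of query $s+1$ is shared. Both experiments use the same key $\keyvariable$ (hence the same leakage $\leakagevariable$), the same random oracle, and the same messages; queries $1, \dots, s$ use $F_\keyval$ in both and so have identical trajectories, and queries $s+2, \dots, \numqueries$ draw their unforced bits from one common uniform random function. The only discrepancy is that query $s+1$ uses a uniform random function in $Q_s$ but $F_\keyval$ in $Q_{s+1}$. I would run query $s+1$'s $\totaltime$ rounds simultaneously in both experiments, and at each round $r$ whose visited position $(R_{r-1}, r)$ is not already determined by queries $1, \dots, s$, couple the fresh uniform bit (the $Q_s$ side) with the bit $F_\keyval(R_{r-1}, r) = \oplus_{i \in \sset}\keyval[P_i]$ (the $Q_{s+1}$ side) maximally, so that they disagree with probability exactly $\lVert F_\keyval(R_{r-1},r) - \coinflip \rVert_{TV}$. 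The crucial observation is that if no round of query $s+1$ ever disagrees, then query $s+1$ assigns the same bits to the same positions in both experiments; hence queries $s+2, \dots, \numqueries$ behave identically and $Q_s = Q_{s+1}$. A union bound over the $\totaltime$ rounds therefore gives
\[
\lVert Q_{s+1} - Q_s \rVert_{TV} \;\le\; \sum_{r=1}^{\totaltime} \E\bigl[\, \lVert F_\keyval(R_{r-1},r) - \coinflip \rVert_{TV}\,\bigr],
\]
where the expectation is taken in the coupled experiment.

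To bound each summand I would invoke Sections \ref{technical_section} and \ref{Randomnessofbitsection}. Fix a round $r$ and reveal the information in a careful order: first the leakage $\leakagevariable$, then the outputs of the other $\numqueries - 1$ queries, and then the bits $b_1, \dots, b_{r-1}$ already produced along query $s+1$'s trajectory. Because the round index is part of the oracle's input, distinct rounds query the oracle at distinct points, so at a position not visited by an earlier query the pair $(P, \sset)$ is freshly uniform and $F_\keyval(R_{r-1},r)$ is precisely the XOR bit $c(\sset)$ of Corollary \ref{tvltwo}. Conditioned on the revealed information the key is uniform over the corresponding preimage $S_\leakageval$, and Corollary \ref{tvltwo}, Jensen's inequality, and Lemma \ref{mainlemma} combine to give
\[
\E\bigl[\, \lVert F_\keyval(R_{r-1},r) - \coinflip \rVert_{TV}\,\bigr] \;\le\; \tfrac12\Bigl[\, h^{-1}\Bigl(1 - \tfrac{\alpha + \numprobes}{\keylength}\Bigr)\Bigr]^{\numprobes/2},
\]
provided the entropy hypothesis $\ent(S_\leakageval) \ge \keylength - \alpha$ of Lemma \ref{mainlemma} holds. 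Summing over the $\totaltime$ rounds yields the first term of the lemma.

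This argument is valid only on the event that (i) the entropy hypothesis holds and (ii) the position $(R_{r-1}, r)$ has not already been determined by another query, so that the oracle draw there is genuinely fresh. The total number of bits conditioned on is at most $\leakagelength + \messagelength(\numqueries - 1) + \totaltime$, which is smaller than $\alpha = \leakagelength + \messagelength(\numqueries+1) + \totaltime$; hence the tail bound of Lemma \ref{divide} certifies (i) with room to spare. For (ii), the probability that query $s+1$'s round-$r$ position collides with an already-determined position is small, and accumulating these failures over the $\totaltime$ rounds contributes the second term $\totaltime \cdot 2^{-\messagelength}$. The main obstacle is exactly this bookkeeping: one must reveal only the other queries' outputs (not their full trajectories) so that the conditioned information stays within the $\alpha$-bit budget needed by Lemma \ref{mainlemma}, while simultaneously ruling out the collisions that would spoil the freshness of the oracle draws along query $s+1$'s trajectory; reconciling these two requirements within the stated $\totaltime \cdot 2^{-\messagelength}$ error is the delicate step.
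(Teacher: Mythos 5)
Your overall architecture is the paper's: a hybrid (equivalently, a round-by-round maximal coupling) over the $\totaltime$ rounds of query $s+1$, with each round's contribution bounded by the expected total variation distance between the big-key bit and a $\coinflip$ bit via Corollary~\ref{tvltwo}, Lemma~\ref{mainlemma} and Lemma~\ref{divide}. The genuine gap is in your accounting of the second error term $\totaltime\cdot 2^{-\messagelength}$. You attribute it to the event that query $s+1$'s round-$r$ position collides with a position already determined by another query. But that event needs no exclusion: a colliding position was filled by $F_\keyval$ when the earlier query was evaluated, and queries $1,\dots,s$ are run identically in $Q_s$ and $Q_{s+1}$, so on a collision the two experiments produce the \emph{same} bit and the coupling succeeds for free. (And if you did insist on excluding it, its probability is not $2^{-\messagelength}$: a union bound over the up to $s$ earlier queries sharing round $r$ gives roughly $s\cdot 2^{-(\messagelength-1)}$, which would overshoot the stated bound.) What actually produces $2^{-\messagelength}$ per round --- and what your write-up drops --- is the failure probability of the entropy hypothesis of Lemma~\ref{mainlemma}. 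Lemma~\ref{divide} is a tail bound, not a guarantee: conditioning on roughly $\leakagelength+\messagelength s+i$ bits and demanding $\ent(S_\leakageval)\ge\keylength-\alpha$ leaves exactly $\messagelength$ bits of slack with the paper's choice $\alpha=\leakagelength+\messagelength(\numqueries+1)+\totaltime$, hence a failure probability of $2^{-\messagelength}$ at each round that must be added to the total variation bound. ``With room to spare'' is not free; it is precisely the source of the second term.

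A secondary issue: you propose to reveal ``the outputs of the other $\numqueries-1$ queries'' before analyzing round $r$ of query $s+1$. The outputs of queries $s+2,\dots,\numqueries$ are generated \emph{after} query $s+1$ (their unforced bits are fresh, but their forced bits depend on query $s+1$'s trajectory), so conditioning on them at this stage is circular. The paper instead conditions only on $C_1,\dots,C_s$ and the partial trajectory $X_1(m_{s+1}),\dots,X_i(m_{s+1})$, and disposes of the later queries by observing that they are a function of this common past, the single differing bit, and independent randomness, so the data-processing inequality handles them without further conditioning.
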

\begin{proof}
	It is sufficient to bound 
	\[
	\lVert Z_s - Z'_s \rVert_{TV},
	\]
	where
	\[
	Z_s = (Q_{s+1}, \T_{s+1}); \hspace{1 in}
	Z'_s = (Q_s, \T_{s+1}) \; .
	\]
	\\
	~\\
	and $\T_{s+1} = (X_0(M_{s+1}), \dots, X_\totaltime(\M_{s+1}))$ is the trajectory of message $\M_{s+1}$. \\
	~\\
	Again, we use a hybrid argument. For $i$ with $1 \leq i \leq \totaltime$, let $P_{s,i}$ be the algorithm defined as follows: \\
	~\\
	{\bf Algorithm $P_{s,i}$:}  For the first $s$ queries and for the first $i$ rounds of query $s+1$, we use the pseudorandom function $F_k$. For rounds $i + 1, \dots, \totaltime$ of query $s+1$ and for queries $s + 2, \dots, \numqueries$, any random bit (that was not already determined by the previous queries) will be defined using a uniform random function $F$. \\
	~\\
	Let $W_{s,i}$ be the value of $\Bigl( ( \M_i, \C_i)_{i=1}^\numqueries, \T_{s+1} \Bigr)$ when algorithm $P_{s,i}$ is followed. Note that $W_{s, \totaltime}$ has the distribution of $Z_s$ and $W_{s, 0}$ has the distribution of $Z'_s$. Therefore, by another application of the triangle inequality, we have 
    \[
	\lVert Z_s - Z_s' \rVert_{TV} \leq \sum\limits_{i=0}^{\totaltime-1}\lVert W_{s, i+1} - W_{s, i} \rVert_{TV}
    \]
	The only difference between $P_{s, i+1}$ and 
	$P_{s, i}$ is the bit used in round $i+1$ of query $s+1$. If this bit was determined by the previous queries, then
	it has the same value in both $P_{s, i+1}$ and 
	$P_{s, i}$. Otherwise, it is a $\coinflip$ random variable in 
	$P_{s, i}$ and it uses the ``big key'' pseudorandom function $F_\keyvariable$ in $P_{s, i + 1}$. It is enough to show that the claimed bound on the total variation distance holds even if we condition on the input messages $M_1, \dots, M_\numqueries$. So let $m_1, \dots, m_\numqueries$ be arbitrary input messages. Let $\leak$ be a function on $\{0,1\}^\keylength$ such that $\leak(\keyvariable)$ encodes
	\begin{enumerate}
		\item $\leakagefunc(\keyvariable)$
		\item the values of $C_1, \dots, C_s$ and $X_1(m_{s+1}), \dots, X_i(m_{s+1})$ when algorithm
		$P_{s, i+1}$ is used with key $K$ and input messages $m_1, \dots, m_s$.
	\end{enumerate}
	Let $\leakagevariable = \leak(\keyvariable)$. Note that there are at most
	\[
	2^{\leakagelength} \cdot \left( 2^\messagelength \right)^s \cdot 2^i = 2^{\leakagelength + \messagelength s + i} \leq 2^{\leakagelength + \messagelength \numqueries + \totaltime} 
	\]
	possible values of $\leakagevariable$.
	Define $S_\leakageval := \leak^{-1}(\leakageval)$.  We can use Lemma \ref{divide} to get a bound on the size of $S_\leakagevariable$ that holds with high probability. More precisely, Lemma \ref{divide}
	\[
	\prob\Big( \ent(|S_\leakagevariable|) \leq \keylength - \leakagelength - \messagelength(\numqueries+1) - \totaltime\Big) \leq 2^{-\messagelength} \,.
	\]
	On the event that $\ent(|S_\leakagevariable|) \leq \keylength - \leakagelength - \messagelength(\numqueries+1) - \totaltime$, we can use Lemma \ref{mainlemma} to bound the total variation distance. Using Lemma \ref{mainlemma} with $\alpha = \leakagelength + \messagelength(\numqueries+1) + \totaltime$ and combining this with Corollary \ref{tvltwo} shows that if $B$ is the random bit generated by Algorithm $P_{s, i+1}$ then 
	\[
	\e\Bigl( \lVert B - \coinflip \rVert_{TV} \Bigr) \leq
	{1 \over 2}
	\Bigl[ h^{-1} \Bigl(1 - {\alpha + \numprobes \over \keylength} \Bigr) \Bigr]^{\numprobes/2} + 2^{-\messagelength} \,.
	\]
    Since this one random bit is only nondeterministic difference between $W_{s,i+1}$ and $W_{s,i}$, we have
	\[
	\lVert W_{s, i+1} - W_{s, i} \rVert_{TV}
	\leq {1 \over 2}
	\Bigl[ h^{-1} \Bigl(1 - {\alpha + \numprobes \over \keylength} \Bigr) \Bigr]^{\numprobes/2} + 2^{-\messagelength} \,.
	\]
    This quantity is independent of $i$, so
    \[
	\lVert Z_s - Z_s' \rVert_{TV} \leq {\totaltime \over 2}
	\Bigl[ h^{-1} \Bigl(1 - {\alpha + \numprobes \over \keylength} \Bigr) \Bigr]^{\numprobes/2} + \totaltime \cdot 2^{-\messagelength} \,.
    \]
\end{proof}
Now we use Lemma \ref{qlemma} to bound the sum,
    \begin{eqnarray*}
		\lVert (\M_i, C_i)_{i=1}^\numqueries - (\M^\th_i, C^\th_i)_{i=1}^\numqueries \rVert_{TV}
		&\leq& \sum_{k=0}^{\numqueries-1}
		\lVert Q_{k+1} - Q_k \rVert_{TV}  \\
		&\leq&
		{\numqueries\totaltime \over 2}
		\Bigl[ h^{-1} \Bigl(1 - {\alpha + \numprobes \over \keylength} \Bigr) \Bigr]^{\numprobes/2} + \numqueries\totaltime\cdot 2^{-\messagelength} \,.
	\end{eqnarray*}
	Combining this with Theorem \ref{morristhorp} and another application of the triangle inequality gives
	\begin{equation}
		\label{fff}
		\lVert (\M_i, C_i)_{i=1}^\numqueries - (\M^u_i, C^u_i)_{i=1}^\numqueries \rVert_{TV} \leq
		{\numqueries\totaltime \over 2}
		\Bigl[ h^{-1} \Bigl(1 - {\alpha + \numprobes \over \keylength} \Bigr) \Bigr]^{\numprobes/2} +
		\numqueries\totaltime \cdot 2^{-\messagelength} +
		\frac{\numqueries}{\s+1}\bigg(\frac{4\messagelength\numqueries}{2^\messagelength} \bigg) ^\s \,.
	\end{equation}
	\\
	Finally, we consider the effect of random oracle calls made by the adversary before calculation of $\leakagefunc$. Let $\calls$ be the set of random oracle calls made by the adversary. Note that
	\[
	\calls = \cup_{i = 1}^T \calls_i,
	\]
	where $\calls_i$ is the set of random oracle calls where the input is $(R,i)$ for some $R$. Let $E$ be the event that at least one of the random oracle calls used to evaluate the $\M_i$ is in $\calls$.  Note that for a uniform random message, the value of $R$ (the rightmost $\messagelength-1$ bits) after any number of Thorp shuffles is uniform over $\{0,1\}^{\messagelength-1}$. Hence, the probability that the random oracle call used in stage $r$ of the shuffle is in $\calls_i$ is $\displaystyle \frac{|\calls_i|}{2^{\messagelength-1}}$. 
	Hence, taking a union bound over queries and time steps gives
	\begin{eqnarray*}
		\prob(E) &\leq& \numqueries \sum_{i=1}^\totaltime \frac{|\calls_i|}{2^{\messagelength-1}}   \\
		&=& \frac{ \numqueries |\calls|}{2^{\messagelength-1}}   \\
		&=& \frac{ \numqueries \numoraclecalls }{2^{\messagelength-1}}.
	\end{eqnarray*}
	On the event $E^C$, the adversary's random oracle calls are separate from all the oracle calls used to compute each $M_i$. Since random oracle calls are independent of all each other, the information from the adversary's random oracle calls is irrelevant toward determining if they are in world 0 or world 1. Therefore, unless $E$ occurs, the adversary is as good as an adversary with no random oracle calls. We complete the theorem by adding $\prob(E)$ to the advantage of an adversary with no random oracle calls.
    \begin{eqnarray*}
		\advantage &\leq& \mathbf{MaxAdv}_{\mathbf{0,\numqueries}} + \prob(E) \\
        &\leq&  {\numqueries\totaltime \over 2}
		\Bigl[ h^{-1} \Bigl(1 - {\alpha + \numprobes \over \keylength} \Bigr) \Bigr]^{\numprobes/2} +
		\numqueries\totaltime \cdot 2^{-\messagelength} +
		\frac{\numqueries}{\s+1}\bigg(\frac{4\messagelength\numqueries}{2^\messagelength} \bigg) ^\s + \frac{ \numqueries \numoraclecalls }{2^{\messagelength-1}} \,.
	\end{eqnarray*}
\bibliographystyle{unsrtnat}


\end{document}